\title{Functions out of Higher Truncations}
\author{Paolo Capriotti}
\author{Nicolai Kraus}
\author{Andrea Vezzosi}
\thanks{Nicolai Kraus acknowledges support by the Engineering and Physical Sciences Research Council (EPSRC), grant reference EP/M016994/1.}
\subjclass{F.4.1 Mathematical Logic}
\keywords{homotopy type theory, truncation elimination, constancy on loop spaces}
\theoremstyle{plain}
\newtheorem{theorem}{Theorem}[section]
\newtheorem{lemma}[theorem]{Lemma}
\newtheorem{corollary}[theorem]{Corollary}
\theoremstyle{definition}
\newtheorem{definition}[theorem]{Definition}
\newcommand{\isNull}{\mathsf{isNull}}
\newcommand{\prd}[1]{\Pi_{#1}}
\newcommand{\sm}[1]{\Sigma \left(#1\right) .\,}
\newcommand{\lam}[1]{\lambda #1 .}
\newcommand{\jdeq}{\equiv}
\newcommand{\defeq}{\vcentcolon\equiv} 
\newcommand{\id}[3][]{\ensuremath{#2 =_{#1} #3}\xspace}
\newcommand{\refl}[1]{\ensuremath{\mathsf{refl}_{#1}}\xspace}
\newcommand{\transfib}[3]{\ensuremath{\mathsf{transport}^{#1}(#2,#3)\xspace}}
\newcommand{\transFib}[3]{\ensuremath{\mathsf{transport}^{#1}\left(#2,\, #3\right)\xspace}}
\newcommand{\opp}[1]{\mathord{{#1}^{-1}}}
\newcommand{\UU}{\ensuremath{\mathcal{U}}\xspace}
\newcommand{\UUt}[1]{\ensuremath{ \UU^{#1}}}
\newcommand{\UUpointed}{\UU_{\bullet}}
\newcommand{\eqvsym}{\simeq}
\newcommand{\eqv}[2]{\ensuremath{#1 \eqvsym #2}\xspace}
\newcommand{\eqvsymspace}{\enspace \eqvsym \enspace}
\newcommand{\eqvspace}[2]{\ensuremath{#1 \eqvsymspace #2}\xspace}
\newcommand{\fst}{\pi_1}
\newcommand{\Omegat}{\Omega_t}
\newcommand{\ct}{%
  \mathchoice{\mathbin{\raisebox{0.5ex}{$\displaystyle\centerdot$}}}%
             {\mathbin{\raisebox{0.5ex}{$\centerdot$}}}%
             {\mathbin{\raisebox{0.25ex}{$\scriptstyle\,\centerdot\,$}}}%
             {\mathbin{\raisebox{0.1ex}{$\scriptscriptstyle\,\centerdot\,$}}}
}
\newcommand{\iscontr}{\ensuremath{\mathsf{isContr}}}
\newcommand{\isprop}{\ensuremath{\mathsf{isProp}}}
\newcommand{\isset}{\ensuremath{\mathsf{isSet}}}
\def\compare#1#2#3#4{\if#1#3\if#2#41\else0\fi\else0\fi}
\newcommand{\istype}[1]{
  \edef\a{\compare-2#1\empty\empty}
  \if\a1 \iscontr \else
  \edef\b{\compare-1#1\empty\empty}
  \if\b1 \isprop \else
  \edef\c{#1}
  \if0\c \isset \else
  \mathsf{is}\mbox{-}{#1}\mbox{-}\mathsf{type} \fi\fi\fi
}
\newcommand{\istypeplain}[1]{\mathsf{is}\mbox{-}{#1}\mbox{-}\mathsf{type}}
\newcommand{\mapfunc}[1]{\ensuremath{\mathsf{ap}_{#1}}\xspace} % Without argument
\newcommand{\trunc}[2]{\mathopen{}\left\Vert #2\right\Vert_{#1}\mathclose{}}
\newcommand{\brck}[1]{\trunc{}{#1}}
\newcommand{\bproj}[1]{\mathopen{}\left|#1\right|\mathclose{}}
\newcommand{\sind}{\mathsf{rind}} % restricted (or simplified) elimination principle
\newcommand{\canon}{\mathfrak c} % canonical map
\newcommand{\pinvr}{\mathfrak d} % potential inverse of canonical map
\newcommand{\connn}[1]{\mathsf{conn}_{#1}} % connected components
\begin{document}

\begin{abstract}
 In homotopy type theory, the truncation operator $\trunc n -$ (for a number $n \geq -1$) is often useful if one does not care about the higher structure of a type and wants to avoid coherence problems.
 However, its elimination principle only allows to eliminate into $n$-types, which makes it hard to construct functions $\trunc n A \to B$ if $B$ is not an $n$-type.
 This makes it desirable to derive more powerful elimination theorems.
 We show a first general result: If $B$ is an $(n+1)$-type, then functions $\trunc n A \to B$ correspond exactly to functions $A \to B$ which are constant on all $(n+1)$-st loop spaces.
 We give one ``elementary'' proof and one proof that uses a higher inductive type, both of which require some effort.
 As a sample application of our result, we show that we can construct ``set-based'' representations of $1$-types, as long as they have ``braided'' loop spaces.
 The main result with one of its proofs and the application have been formalised in Agda. 
\end{abstract}

\maketitle

\section{Introduction}

As it is very well-known, the type constructor $\Sigma$ of Martin-L\"of type theory expresses a very strong form of existence.
Although a type of the form $\sm{a:A}P(a)$ is read as ``there exists an element in $A$ for which the predicate $P$ holds'' under the \emph{propositions as types} view,
an element of such a type is more than a proof of mere existence: it includes a very concrete example of an element $a:A$. 
This is not always satisfying as, for example, the set-theoretic axiom of choice becomes a tautology when translated naively to type theory.
The idea of adding a construction which allows to formulate existence in a weaker sense has been studied intensively in various different settings.
As far as we know, the first documented appearance are \emph{squash types} in the extensional theory of NuPRL~\cite{nuprl}. 
Later, Awodey and Bauer introduced a similar concept in extensional Martin-L\"of type theory, called \emph{bracket types}~\cite{awodeyBauer_bracketTypes}.
Homotopy type theory has introduced the \emph{propositional truncation} operation, written $\trunc {-1} -$ or simply $\brck -$~\cite{HoTTbook}.
It forces all elements to be equal, in the sense that the identity type $\id x y$ is inhabited for any $x,y : \trunc {-1} A$, and it is well-known that $\id x y$ will in fact be uniquely inhabited (i.e.\ equivalent, or isomorphic, to the unit type).
Classically, $\trunc {-1} A$ is always equivalent to either the unit type or the empty type, but this is of course not the case in a constructive setting.

The homotopical view has suggested that propositional truncation is only one out of infinitely many operations that reduce the complexity of a type.
As ``types are weak $\omega$-groupoids'' (\cite{lumsdaine_weakOmegaCatsFromITT} and~\cite{bg:type-wkom}), it is easy to imagine that there is, for every number $n \geq -1$, an operation which trivialises 
all the structure above level $(n+1)$. 
In other words, this is a reflector for the category of weak $n$-groupoids, viewed as a subcategory of weak $\omega$-groupoids, roughly speaking.
In homotopy type theory, we write this operation as $\trunc n -$ (``$n$-truncation''), and it can be seen and implemented as a \emph{higher inductive type}~\cite{HoTTbook}.
The truncation operator $\trunc n -$ is a \emph{monad} in some appropriate sense (and even a \emph{modality} in the sense of~\cite{HoTTbook}), and if we want to, we can choose to work completely in that monad.
Types that are canonically equivalent to their $n$-truncation are called \emph{$n$-types}, or \emph{$n$-truncated types}.

Considering $n$-types (for some given $n$) instead of all types is useful if we do not care about or want to avoid potential higher equality proofs. 
For example, if we formalise algebraic structures such as groups, we may require that the type of group elements is of truncation level $0$ in order to match the set-theoretic definition: equality of group elements should be a mere proposition and not carry additional information, that is, there is at most one proof that given group elements are equal.
As a consequence, for any type $A$ with an element $a:A$, the type $\id a a$ is not necessarily a group. 
It does have a neutral element an elements can be inverted and composed, corresponding to the fact that equality is reflexive, symmetric, and transitive.
However, $\id a a$ is not a $0$-truncated type.
We can use $0$-truncation to make up for this, and $\trunc 0 {\id a a}$ is indeed a group, called the \emph{fundamental group} of $A$ at basepoint $a$, while $\id a a$ (as \emph{pointed type} also written $\Omega(A,a)$) is the \emph{loop space} at point $a$.

A drawback of truncations is that it can be hard to get out of them, that is, ``to leave the monad''. 
A priori we have, for any type $A$ and number $n \geq -1$, a map $\bproj{-} : A \to \trunc n A$, but there is in general no function in the other direction.
The universal property of $\trunc n -$ says that, via composition with $\bproj{-}$, the type of functions $\trunc n A \to B$ is equivalent to the type $A \to B$, but only if $B$ is $n$-truncated.
To continue with the previous example, an element of the fundamental group of $A$ at basepoint $a$ is really an equivalence class of equality proofs (or \emph{paths}) between $a$ and itself, but it is in general impossible to get a specific representative from such a class; that is, we cannot construct a section of the map $\bproj - : (\id a a) \to \trunc 0 {\id a a}$.
Of course, we would not have expected anything else: it is unreasonable to assume that we can make this sort of choice without any further assumptions.
% NOTE: here I have added a sentence for the third reviewer.
Although the truncation operator $\trunc n -$ is often described as ``cutting of'' higher structure of a type, it is more accurate to think of it as ``filling non-trivial loops'', which makes it plausible that it is \emph{harder} instead of \emph{easier} to define a function out of $\trunc n A$ than out of $A$.

Unlike in the example above, it is in some cases reasonable to expect that we can get a function $\trunc n A \to B$ even if $B$ is not an $n$-type.
If $\trunc {-1} A$ tells us that $A$ has some element without revealing a concrete one to us, then a function $\trunc {-1} A \to B$ should be the same as a function $f : A \to B$ which cannot look at the ``input''.\footnote{This only makes sense if stated internally. Of course, a concrete implementation of $f$ \emph{can} compute differently if applied to different terms of type $A$. As long as we stay inside the theory, we cannot talk about judgmental equality.}
What exactly this means is difficult to state in general (see~\cite{kraus_generaluniversalproperty}), so let us restrict ourselves to the case that $B$ is $0$-truncated (also called a \emph{set}).
In this case, the statement that ``$f$ does not look at its input'' can be expressed by saying that $f$ maps any pair of inputs to equal values, $\prd{x,y:A} (\id {f(x)}{f(y)})$. 
Indeed, it has been shown that a function $f$ with this behaviour gives rise to a map $\trunc {-1} A \to B$~\cite{krausEscardoEtAll_existence}.

Even if we have a function $A \to B$, it can be very hard to tell whether it is possible to construct a function $\trunc n A \to B$ unless $B$ is an $n$-type,
and if it is possible, there is no direct way to do so as the universal property (or the elimination principle) cannot be applied directly.
The usual workaround is looking for an $n$-type $C$ ``in the middle'', that is such that one has functions $A \to C$ and $C \to B$.
One can then apply the elimination principle to construct a function $\trunc n A \to C$ which, by composition, yields a function $\trunc n A \to B$ as desired.
The type $C$ is constructed ad-hoc, and it is natural to ask for a more powerful elimination principle (or universal property) of $\trunc n -$ which allows the construction of functions $\trunc n A \to B$ in a more principled and streamlined way.

This has been done for the $(-1)$-truncation in previous work~\cite{kraus_generaluniversalproperty}, where it is shown that functions $\trunc {-1} A \to B$ correspond exactly to functions $A \to B$ with an infinite tower of coherence conditions. 
This can be understood as a generalised version of the usual universal property of $\trunc{-1}{-}$.
If $B$ is known to be $n$-truncated for some fixed finite $n$, the infinite tower becomes finite and can be expressed directly in type theory, whereas the existence of \emph{Reedy limits}~\cite{shulman_inversediagrams} is necessary for the general case.
If $B$ is a $0$-type, the ``tower'' of coherence condition is exactly the single condition $\prd{x,y:A} (\id {f(x)}{f(y)})$ discussed above.
If $B$ is even a $(-1)$-type itself, the tower vanishes completely and the usual universal property remains.
Unfortunately, it seems that there is no immediate generalisation of the proof
of~\cite{kraus_generaluniversalproperty} to $n$-truncations.

In this paper, we do consider $n$-truncations for general $n$, but we assume
that $B$ is $(n+1)$-truncated, and already this case seems to be involved.
We show that functions $\trunc n A \to B$ correspond exactly to those functions $A \to B$ that are constant on all $(n+1)$-st loop spaces.
We offer two proofs for this fact, one which works in ``plain'' homotopy type
theory with general truncations, and the other involving a higher inductive
type.
The first proof, which we call the ``elementary proof'', is close to not even
requiring the univalence axiom (the central concept of homotopy type theory expressing that equality in the universe is given by type equivalence). 
The only reason why univalence is
necessary is that we need to be able to translate between truncations ($\trunc n
{\id[A] a b}$ is equivalent to $\id[\trunc {n+1} A] {\bproj a} {\bproj b}$).
The second proof (Section~\ref{sec:hitproof}) uses an argument that makes
crucial use of both a higher inductive type and the univalence axiom, and we
therefore call it the ``HIT proof''.
In the HIT proof, we will construct a higher inductive type in such a way that
it is the ``initial'' type through which functions $f : A \to B$ with the
property \eqref{eq:thm-null} factor, and we will show that this type
\emph{is} really $\trunc n A$.
Although we show an equivalence of types, we believe that the main application is the construction of functions $\trunc n A \to B$, that is, one may often want to use only one direction of the equivalence.
Therefore, the result can be used as an elimination principle that is more powerful than the usual recursion principle of the truncation.
We also present a sample application (a translation of types into ``set-based representation''), and conclude with a discussion on how the generalised statement should look like, and under which assumptions it should be provable.

The main contents of this paper have, in slightly different form, appeared in the second-named author's Ph.D. thesis~\cite{nicolai:thesis}.
\vspace{0.1em}

\textbf{Outline.}
We start by stating the result of the paper in Section~\ref{sec:theorem}, and discuss two special cases ($n \jdeq -1$ and $n \jdeq 0$).
In Section~\ref{sec:elementaryproof}, we give the ``elementary'' proof of this result, and in Section~\ref{sec:hitproof}, the (technically harder, but conceptually clear) proof that uses a higher inductive type.
We discuss a sample application of the case $n \jdeq 0$ in Section~\ref{sec:application}, namely a construction of a \emph{set-based representation} of any given type, provided that it fulfils a property that e.g. loop spaces do.
Finally, in Section~\ref{sec:conclusions}, we compare the two proofs with each other.
We also compare our result with the \emph{general universal property of the propositional truncation} as proved before~\cite{kraus_generaluniversalproperty}, and discuss why the potential generalisations seem so much more involved than what we have done here.
\vspace{0.1em}

\textbf{Setting.}
We consider the theory of the standard reference on homotopy type theory, that is, the textbook~\cite{HoTTbook}.
To summarise, we need a version of intensional Martin-L\"of type theory with $\Sigma$, $\Pi$, 
and identity types. 
In addition, we assume that the theory has a univalent universe, 
and that there are truncation operators $\trunc n -$ for all $n \geq -1$, with the canonical projections $\bproj - : A \to \trunc n A$.
This concept is explained in detail in~\cite[Chap.\ 7.3]{HoTTbook}).
The statement and the first proof that we give do not need higher inductive types~\cite[Chap.\ 6]{HoTTbook} other than the truncations, while the second proof that we give makes heavy use of such a higher inductive type.
\vspace{0.1em}

\textbf{Agda Formalisation.}
We have formalised the main result, together with the ``elementary'' proof (Section~\ref{sec:elementaryproof}) and the sample application (Section~\ref{sec:application}), in Agda~\cite{paolo:truncagda}.
The source code can be found on GitHub, at
% NOTE: if this url is changed, it is also required to change the url in the bib file!
% \url{https://github.com/pcapriotti/agda-base/tree/trunc}.
\href{https://github.com/pcapriotti/agda-base/tree/trunc}{\nolinkurl{github.com/pcapriotti/agda-base/tree/trunc}}.
The results of this paper are contained in the module \href{https://github.com/pcapriotti/agda-base/tree/trunc/hott/truncation}{\nolinkurl{hott.truncation.elim}}.
A browsable HTML version of the formalisation can be accessed at
\href{http://www.paolocapriotti.com/agda-base/trunc/hott.truncation.elim.html}{\nolinkurl{paolocapriotti.com/agda-base/trunc/hott/truncation/elim.html}}. 
%\url{http://www.paolocapriotti.com/agda-base/trunc/hott.truncation.elim.html}.
We encourage a reader who
is not familiar with Agda to have a look at the latter, which does not need any
software apart from a web browser.
For all the technical details, we refer to the readme file in the repository.

On a minor note, we have chosen not to make use of the common (but, as far as we know, not justified by a formal argument) hack that makes truncations satisfy the judgmental computation rule.
As we wanted our formalisation to be readable, this has required us to think of some implementation strategies that make the code in this setting more elegant than the ``straightforward'' formalisation approaches.

\section{The Statement of the Theorem} \label{sec:theorem}

Let us begin by clarifying some notation.
In general, we stick closely to the terminology of the standard reference on the topic, the textbook~\cite{HoTTbook}.
We write $\prd{a:A}B(a)$ for $\Pi$-types as it is done there, but $\sm{a:A}B(a)$ for $\Sigma$-types.\footnote{
This seemingly inconsistent notation is intentional: we sometimes have nested $\Sigma$-types, e.g.\ $\sm{a:A}\sm{b:B(a)}C(a,b)$, and we view the components as ``equally valued''; thus, writing exactly one component bigger than the others would not look correct.}
For better readability, we uncurry implicitly and write $f(a,b) : C$, even if $f$ is a function of type $A \to B \to C$.
Instead of $\lam h h \circ g$, we write $\_ \circ g$.
By the \emph{distributivity law of $\Sigma$ and $\Pi$}, we mean the well-known equivalence
\begin{equation}
 \eqvspace {\prd{a:A}\sm{b:B(a)}C(a,b)} {\sm{g : \prd {a:A}B(a)} \prd{a:A}C(a,g(a))},
\end{equation}
sometimes called the \emph{type-theoretic axiom of choice}.
As it is standard~\cite{HoTTbook}, we write $\istype n (A)$ for the propositional type expressing that $A$ is $n$-truncated if $n \geq -2$ is an integer, defined by
\begin{align}
 &\istypeplain{(-2)}(A) \defeq \sm{a_0 : A} \prd{a:A} \id{a}{a_0} \\  % \istype{-2}(A) \defeq 
 &\istype{(n+1)}(A) \defeq \prd{a_1,a_2:A} \istype n (\id {a_1}{a_2}),
\end{align}
and the special case when $n$ is $-2$ (``$A$ is contractible'') is also written as $\istype {-2}(A)$. 
We assume that there is a universe $\UU$, and we write $\UUt n$ for the type (or ``universe'') of $n$-types in $\UU$ (cf.~\cite[Chap.\ 7.1]{HoTTbook}),
\begin{equation}
 \UUt n \defeq \sm{X : \UU} \istype n (X).
\end{equation}
Further, we write $\UUpointed$ for the type (or ``universe'') of pointed types in $\UU$ (cf.~\cite[Def.\ 2.1.7]{HoTTbook}),
\begin{equation}
 \UUpointed \defeq \sm{X : \UU} X.
\end{equation}
If we have a type $A$ and a pointed type $(B,b)$, together with a function $f : A \to B$, we say that ``$f$ is null'' if it is constantly $b$, that is,
\begin{equation}
 \isNull(f) \defeq \prd{x:A} \id{b}{f(x)}.
\end{equation}

Recall that there is an endofunction on $\UUpointed$, the \emph{loop space function} $\Omega$,
\begin{equation}
 \Omega(A,a) \defeq \left( \id a a , \refl a \right).
\end{equation}
For any natural number $n$, we can iterate this endofunction $n$ times, for which we write $\Omega^n$.
Instead of $\fst \left(\Omega^n(A,a)\right)$ and $\fst (\left(\Omega(A,a))\right)$, we simply write $\Omegat^n(A,a)$ and $\Omegat(A,a)$ if we want to talk about the underlying type (i.e.\ ignore the point).
Further, given two types $A$ and $B$ together with any function $f : A \to B$ and a point $a:A$, we have a function
\begin{equation}
 \mapfunc {f,a} : \Omegat(A,a) \to \Omegat(B, f(a)).
\end{equation}
In the same way, we have (given $A$, $B$, $f$ as before) $\mapfunc {f,a}^n : \Omegat^n(A,a) \to \Omegat^n(B,f(a))$, and
$\Omega$ is really an endofunctor in some appropriate sense.\footnote{Of course, $\mapfunc {f,a}$ is its action on the morphism $f$ and could thus rightfully be called $\Omega(f,a)$.}

Our result can now be stated as follows:
\begin{theorem} \label{thm:mainresult}
 Let $n \geq -1$ be a number, $A$ a type, and $B$ an $(n+1)$-type. 
 Assume that $f : A \to B$ is a function.
 Then, $f$ can be factored through the $n$-truncation, that is 
 \begin{equation} \label{eq:thm-lift}
  \sm{f' : \trunc n A \to B} \id{f'\circ \bproj -}{f},
 \end{equation}
 if and only if $\mapfunc {f,a}^{n+1}$ is null for every $a$,
 \begin{equation} \label{eq:thm-null}
  \prd{a:A} \isNull(\mapfunc {f,a}^{n+1}),
 \end{equation}
 and both of the types~\eqref{eq:thm-lift} and~\eqref{eq:thm-null} are propositional.
\end{theorem}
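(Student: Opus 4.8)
The plan is to prove the statement as three claims: that the type \eqref{eq:thm-null} is a proposition, that the type \eqref{eq:thm-lift} is a proposition, and that the two are logically equivalent; since a bi-implication between propositions automatically upgrades to an equivalence of types, this yields the statement in full. I would dispatch the two propositionality claims and the easy direction of the equivalence first, and leave the implication from \eqref{eq:thm-null} to \eqref{eq:thm-lift} --- the substantial part --- to Sections~\ref{sec:elementaryproof} and~\ref{sec:hitproof}. For \eqref{eq:thm-null}: iterating the fact that the loop space of a $k$-type is a $(k-1)$-type, and using the hypothesis that $B$ is $(n+1)$-truncated, one sees that $\Omegat^{n+1}(B, f(a))$ is a $0$-type for every $a : A$; hence each identity type appearing in $\isNull(\mapfunc {f,a}^{n+1})$ is a proposition, and \eqref{eq:thm-null}, being a $\Pi$-type of propositions, is one as well.

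For the propositionality of \eqref{eq:thm-lift}: this type is the homotopy fibre over $f$ of the precomposition map $\mathsf{ev} : (\trunc n A \to B) \to (A \to B)$, $\mathsf{ev}(g) \defeq g \circ \bproj -$, so it suffices to show that $\mathsf{ev}$ is an embedding. Using function extensionality, the action of $\mathsf{ev}$ on a path $\id{g_1}{g_2}$ is identified with the map restricting a homotopy $g_1 \sim g_2$ over $\trunc n A$ to one $g_1 \circ \bproj - \sim g_2 \circ \bproj -$ over $A$; since $B$ is $(n+1)$-truncated, the identity types of $B$ occurring here are $n$-types, so the dependent elimination principle of $\trunc n -$ makes this restriction an equivalence, and $\mathsf{ev}$ is an embedding. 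The same observation settles the easy direction of the equivalence: if $f = f' \circ \bproj -$, then by functoriality of $\mathsf{ap}$ (and transport along the given homotopy) the map $\mapfunc {f,a}^{n+1}$ factors through $\mapfunc {\bproj -, a}^{n+1} : \Omegat^{n+1}(A,a) \to \Omegat^{n+1}(\trunc n A, \bproj a)$, whose codomain is a proposition because $\trunc n A$ is $n$-truncated; hence that map, and therefore $\mapfunc {f,a}^{n+1}$, is null.

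The remaining implication --- producing an element of \eqref{eq:thm-lift} from a witness of \eqref{eq:thm-null} --- is the genuine obstacle. The naive move, defining $f' : \trunc n A \to B$ by the non-dependent recursion principle of $\trunc n -$, fails because $B$ is $(n+1)$-truncated, one level above what that principle permits. I would instead try to define $f'$ by \emph{dependent} elimination into a carefully chosen family $P : \trunc n A \to \UU$ of $n$-types, arranged so that $P(\bproj a)$ is contractible with a centre built from $f(a)$ --- so that a point of $P(z)$ carries a distinguished element of $B$, to be taken as $f'(z)$ --- and so that the loop-constancy data of \eqref{eq:thm-null} is exactly what verifies this contractibility; this is the kind of argument that the elementary proof of Section~\ref{sec:elementaryproof} carries out. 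The delicate points are making $P(z)$ genuinely $n$-truncated rather than $(n+1)$-truncated, which requires absorbing the extra level of $B$ using the $n$-connectivity of $\bproj - : A \to \trunc n A$, and checking that the resulting $f'$ indeed restricts to $f$ along $\bproj -$. A conceptually cleaner alternative, realised in Section~\ref{sec:hitproof}, is to introduce a higher inductive type that is by construction the universal type through which every $f$ satisfying \eqref{eq:thm-null} factors, and then to prove, by a connectivity computation, that this type is equivalent to $\trunc n A$, so that the factorisation becomes immediate. I expect this last implication to be where essentially all the difficulty lies.
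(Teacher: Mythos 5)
Your proposal is correct, and the parts you actually carry out (propositionality of \eqref{eq:thm-null}, propositionality of \eqref{eq:thm-lift}, and the easy implication) are sound; the hard implication is deferred to Sections~\ref{sec:elementaryproof} and~\ref{sec:hitproof}, which is exactly where the paper puts it. Your decomposition differs from the paper's in one genuine respect: the paper never proves the propositionality of \eqref{eq:thm-lift} directly. Instead it reformulates the theorem as Lemma~\ref{lem:reform}, asserting that the canonical map $\canon_n$ into $\sm{f : A \to B}\prd{a:A}\isNull(\mapfunc{f,a}^{n+1})$ is an equivalence, and then reads Theorem~\ref{thm:mainresult} off the commuting triangle of Figure~\ref{fig:fibres}: the types \eqref{eq:thm-lift} and \eqref{eq:thm-null} are the fibres over $f$ of the two legs, hence equivalent, and the propositionality of \eqref{eq:thm-lift} is \emph{inherited} from that of \eqref{eq:thm-null}. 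You instead show up front that $\_ \circ \bproj-$ is an embedding, by identifying its action on paths (via function extensionality) with precomposition of homotopies and invoking the dependent universal property of $\trunc n -$ for the family of $n$-types $\lam z \id{g_1(z)}{g_2(z)}$; that argument is correct but does not appear in the paper. What your route buys is that the later sections only need to supply a single implication between two propositions rather than an equivalence of types; what the paper's route buys is that the fibrewise and total-space statements come packaged together, the total-space form being the one that both of its proofs actually establish. Since those sections prove the full equivalence $\canon_n$ anyway, your reduction consumes their output without loss (apply $\canon_n^{-1}$ to $(f,q)$ and compare first components). One small caveat: your sketch of the elementary proof as a single dependent elimination into a pointwise-contractible family is not quite how Section~\ref{sec:elementaryproof} proceeds (it inducts on $n$, treating $n$-connected $A$ first and then gluing over connected components), but as you only cite that section as a black box this does not affect the correctness of your reduction.
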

An immediate corollary tells us how we can eliminate out of truncations:
\begin{corollary}
 Assume we have $n$, $A$ and $B$ as in Theorem~\ref{thm:mainresult}.
 If we want to construct a function $\trunc n A \to B$, it suffices to find a function $f : A \to B$ which satisfies $\prd{a:A} \isNull(\mapfunc {f,a}^{n+1})$. 
\end{corollary}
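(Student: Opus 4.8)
The plan is to derive the corollary as an immediate consequence of the nontrivial ``if'' direction of Theorem~\ref{thm:mainresult}, so that no genuinely new argument is needed. We are handed a function $f : A \to B$ together with a proof of $\prd{a:A} \isNull(\mapfunc {f,a}^{n+1})$, which is precisely an inhabitant of the type~\eqref{eq:thm-null}. Since $A$, $B$ and $n$ satisfy the hypotheses of the theorem, I would first invoke the equivalence it provides, applied in the direction from~\eqref{eq:thm-null} to~\eqref{eq:thm-lift}. Feeding the given null proof into this implication yields an element of
\begin{equation*}
 \sm{f' : \trunc n A \to B} \id{f'\circ \bproj -}{f}.
\end{equation*}

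The second and final step is to extract the function. An element of this $\Sigma$-type consists of a map $f' : \trunc n A \to B$ paired with a proof that $f' \circ \bproj{-}$ equals $f$, so applying $\fst$ produces the desired function $\trunc n A \to B$. The retained equality $\id{f'\circ \bproj -}{f}$ witnesses that $f'$ really does restrict to $f$ along the truncation map $\bproj{-}$; the corollary as stated only asks for the function itself, but this extra data is what justifies calling $f'$ an elimination of $f$ out of the truncation.

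There is essentially no obstacle at this stage, since the entire substance has already been discharged inside the proof of Theorem~\ref{thm:mainresult}. The only point worth keeping in mind is that the corollary relies on the harder of the two implications constituting the biconditional, namely the construction of a factorisation from the null hypothesis, rather than on its converse. Once the theorem is available, the corollary is obtained by a single projection.
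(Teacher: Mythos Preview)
Your proposal is correct and matches the paper's treatment: the corollary is stated there as immediate from Theorem~\ref{thm:mainresult}, and your extraction of $f'$ via the implication from~\eqref{eq:thm-null} to~\eqref{eq:thm-lift} followed by $\fst$ is exactly the intended (trivial) argument.
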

Before approaching a proof of Theorem~\ref{thm:mainresult}, let us have a look at two special cases, namely the cases $n \jdeq -1$ and $n \jdeq 0$.
The first case is known~\cite{krausEscardoEtAll_existence} and will serve as the base case for the two general proofs presented later. 
The second case is not strictly necessary, but serves to exemplify the techniques used in the ``HIT proof'' (Section \ref{sec:hitproof}).
\vspace{0.1em}

\textbf{The case $\mathbf{n \jdeq -1}$:}
The simplified statement of Theorem~\ref{thm:mainresult} reads in this case as follows:
Assume we are given a type $A$ and a $0$-type $B$ (often called a \emph{set}).
A function $f : A \to B$ factors through the propositional truncation 
 if and only if 
\begin{equation} \label{eq:const}
 \prd{x,y:A} \id{f(x)}{f(y)}.
\end{equation}
This follows easily from previous work, e.g.~\cite[Prop.\ 2.2]{kraus_generaluniversalproperty}.
It is a pleasant surprise that ``$\mapfunc {f,a}^0$ is null for all $a$'', simply by unfolding our definitions, simplifies to \eqref{eq:const}, which is ``$f$ is constant'' in the sense of~\cite{krausEscardoEtAll_existence}.\footnote{In the simplified formulation, we have omitted the part that the two logically equivalent types are propositional. 
This is easy to see here, and will in the general case be part of the proof.}
\vspace{0.1em}

\textbf{The case $\mathbf{n \jdeq 0}$.}
Here, our result (Theorem~\ref{thm:mainresult}) implies that, for any type $A$ and $1$-type $B$, a function $f : A \to B$ factors through $\trunc 0 A$ if and only if, for all $a:A$ and $p : \id a a$, we have that $\mapfunc {f,a}(p)$ equals $\refl {f(a)}$.
As Shulman has remarked in an online discussion (in the comment section of a blog post~\cite{blog_lenses_shulRezkComment}), this follows from the \emph{Rezk completion}~\cite{ahrens_rezk}:
Let $\tilde A$ be the precategory with the type $A$ of objects and $\mathsf{hom}(a_1,a_2) \defeq \trunc {-1} {\id[A] {a_1} {a_2}}$, and let $\tilde B$ be the category with $B$ as objects and $\mathsf{hom}(b_1,b_2) \defeq (\id[B]{b_1}{b_2})$.
Then, $f$ with the condition $\prd{a:A} \isNull(\mapfunc{f,a})$ gives (already using the case $n \jdeq -1$) rise to a functor $\tilde A \to \tilde B$.
Such a functor generates a functor between the Rezk completion of $\tilde A$ and the category $\tilde B$, and the former happens to be $\trunc 0 A$.
\vspace{0.1em}

In the remainder of the current section, we give a simple technical construction which essentially serves as a reformulation of Theorem~\ref{thm:mainresult} and which is necessary for both the elementary and the HIT proof.
For types $A$ and $B$, assume we are given a function $g : \trunc n A \to B$.
We can consider the composition
 $A \xrightarrow{\bproj -} \trunc n A \xrightarrow g B$.
For any $a:A$ we have, by functoriality of $\Omega^{n+1}$, that the composition
\begin{equation}
 \Omegat^{n+1}(A,a)\xrightarrow {\mapfunc{\bproj -,a}^{n+1}} \Omegat^{n+1}(\trunc n A,\bproj a) \xrightarrow {\mapfunc {g,\bproj a}^{n+1}} \Omegat^{n+1}(B,g(\bproj a))
\end{equation}
is equal to 
$\mapfunc{g  \circ \bproj - , a}^{n+1}$. 
But $\Omegat^{n+1}(\trunc n A, \bproj a)$ is contractible (\cite[Thm.\ 7.2.9]{HoTTbook}), and $\mapfunc {g,\bproj a}^{n+1}$ clearly maps its unique element to the basepoint of $\Omega^{n+1}(B,g(\bproj a))$.
Therefore, $\mapfunc{g  \circ \bproj - , a}^{n+1}$ is null. 
From this construction, we get a canonical function
\begin{equation}
 \canon_{n} : \left(\trunc n A \to B\right) \to \sm{f : A \to B} \left(\prd{a:A}\isNull(\mapfunc{f,a}^{n+1})\right).
\end{equation}
We then claim the following:
\begin{lemma}[``Total space'' formulation of Theorem~\ref{thm:mainresult}] \label{lem:reform}
 For any $n \geq -1$, any type $A$ and any $(n+1)$-type $B$, the types
  $\trunc n A \to B$
 and 
  $\sm{f : A \to B} \prd{a:A} \isNull(\mapfunc{f,a}^{n+1})$
 are equivalent, and the equivalence is given by the canonical function $\canon_n$.
\end{lemma}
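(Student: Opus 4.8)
The plan is to use that $\canon_n$ is a \emph{fibrewise} map and to reduce the lemma to a statement about its fibres. Both sides of the claimed equivalence project naturally to $A \to B$: the left-hand side $\trunc n A \to B$ by precomposition with $\bproj -$, and the right-hand side $\sm{f : A \to B} \prd{a:A}\isNull(\mapfunc{f,a}^{n+1})$ by its first projection. The construction of $\canon_n$ is exactly such that the resulting triangle over $A \to B$ commutes, so $\canon_n$ is a map over $A \to B$. The fibre of the left projection over a given $f : A \to B$ is precisely the type~\eqref{eq:thm-lift}, and the fibre of the right projection over $f$ is precisely~\eqref{eq:thm-null}. Since a fibrewise map is an equivalence exactly when it induces an equivalence on every fibre, the lemma becomes the assertion that, for each $f$, the induced map from~\eqref{eq:thm-lift} to~\eqref{eq:thm-null} is an equivalence.

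Next I would show that both of these fibres are propositions, so that the induced map is an equivalence as soon as it is a logical equivalence. For~\eqref{eq:thm-null}: because $B$ is an $(n+1)$-type, the type $\Omegat^{n+1}(B, f(a))$ is a set, hence its identity types are propositions, hence $\isNull(\mapfunc{f,a}^{n+1})$ — which is pointwise such an identity type — is a proposition, and therefore so is~\eqref{eq:thm-null}. For~\eqref{eq:thm-lift}: I would argue that precomposition with $\bproj -$, regarded as a map $(\trunc n A \to B) \to (A \to B)$, is an embedding. Under function extensionality its action on an identity type $f'_1 = f'_2$ is precomposition with $\bproj -$ on the dependent function type $\prd{z : \trunc n A}(f'_1 z = f'_2 z)$; since $B$ is an $(n+1)$-type this is a family of $n$-types over $\trunc n A$, and for such families precomposition with $\bproj -$ is an equivalence by the induction principle of $\trunc n -$. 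An embedding has propositional fibres, and the fibre over $f$ is~\eqref{eq:thm-lift}.

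After these two steps the lemma is reduced to proving, for each $f : A \to B$, that~\eqref{eq:thm-lift} and~\eqref{eq:thm-null} are logically equivalent — which is just Theorem~\ref{thm:mainresult} with the (now automatic) propositionality already dispatched. One direction is free: it is the very construction of $\canon_n$ carried out just above the lemma, which sends an extension $f'$ to a proof that $\mapfunc{f,a}^{n+1}$ is null. So the remaining, and I expect the main, obstacle is the converse: from a family $\prd{a:A}\isNull(\mapfunc{f,a}^{n+1})$ one must build an actual extension $f' : \trunc n A \to B$ with $f' \circ \bproj - = f$. The natural line of attack is induction on $n$: the base case $n \jdeq -1$ is the known result that a ``constant'' map into a set factors through the propositional truncation (obtained by factoring through its image, which is then a proposition), while the inductive step must turn the null hypothesis at level $n$ into usable data one level down, presumably by analysing $\trunc n A$ through the $(n-1)$-truncations of its loop spaces. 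Carrying out that step is exactly what the two proofs in Sections~\ref{sec:elementaryproof} and~\ref{sec:hitproof} do, by rather different means.
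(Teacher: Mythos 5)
Your preliminary reductions are all correct and in fact mirror the paper's own remarks following the statement of Lemma~\ref{lem:reform}: the paper uses exactly the triangle of Figure~\ref{fig:fibres} and \cite[Thm.\ 4.7.7]{HoTTbook} to pass between the total-space formulation and the fibrewise one (Theorem~\ref{thm:mainresult}), and your propositionality arguments for the two fibres --- that $\isNull(\mapfunc{f,a}^{n+1})$ is a product of identity types of the set $\Omegat^{n+1}(B,f(a))$, and that $\_ \circ \bproj -$ is an embedding because its action on paths is, via function extensionality, precomposition on a family of $n$-types over $\trunc n A$ --- are sound, and are indeed part of what Theorem~\ref{thm:mainresult} asserts. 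So you have correctly reduced the lemma to the claim that every $f : A \to B$ satisfying $\prd{a:A}\isNull(\mapfunc{f,a}^{n+1})$ admits a lift $f' : \trunc n A \to B$ with $\id{f'\circ\bproj-}{f}$.

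That reduction, however, is where your argument stops, and the statement you are left with is essentially the entire mathematical content of the paper. Observing that the inductive step ``must turn the null hypothesis at level $n$ into usable data one level down'' and that carrying it out ``is exactly what the two proofs in Sections~\ref{sec:elementaryproof} and~\ref{sec:hitproof} do'' is a pointer to a proof, not a proof. Concretely, what is missing is one of the following: (a) the connectedness argument of Lemma~\ref{lem:forConn}, where for $n$-connected $A$ the candidate inverse is the constant function $\lam{\_} f(a)$ at a point $a$ obtained through the eliminator (legitimate because the goal is propositional), and the only nontrivial check, $\id{f(a)}{f(a')}$, is discharged by applying the induction hypothesis to the $(n-1)$-connected type $\id{a}{a'}$ and the $n$-type $\id{f(a)}{f(a')}$ with $k \defeq \mapfunc{f}$, whose nullity at level $n$ is exactly $q(a')$ --- followed by the decomposition of an arbitrary $A$ into its $n$-connected components (Lemma~\ref{lem:connpoint-point}); or (b) the construction of the higher inductive type $H$ together with the proofs that it is $n$-truncated (Lemma~\ref{lem:H-truncated}, via the local Hedberg argument of Lemma~\ref{lem:glha}) and that it has the universal property of Lemma~\ref{lem:semisimp:hit-up}. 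Without supplying one of these (or some new construction of the lift), the converse direction --- and hence the lemma --- remains unproven.
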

It is easy to see that Lemma~\ref{lem:reform} does indeed imply, and is nearly immediately equivalent to, Theorem~\ref{thm:mainresult}.
Consider the triangle shown in Figure~\ref{fig:fibres}, where the top horizontal map is the canonical map $\canon_n$, the left one is composition with $\bproj -$, and the right one is simply the projection.
The triangle clearly commutes (judgmentally) by construction.
Let us fix some function $f : A \to B$. 
The fibre (or ``inverse image'') over $f$ is, in the case of $\_ \circ \bproj -$, exactly \eqref{eq:thm-lift}, i.e.\ the statement that $f$ can be lifted. 
In the second case, the fibre is \eqref{eq:thm-null}.
Therefore, $\canon_n$ induces an equivalence of the two fibres, which implies
that $\canon_n$ itself is an equivalence (see~\cite[Thm.\ 4.7.7]{HoTTbook}).
\begin{figure}
\begin{center}
\begin{tikzpicture}[x=\textwidth/40,y=\textwidth/40]

\node (P1) at (-10,0) {$\trunc n A \to B$}; 
\node (P2) at (10,0) {$\sm{f : A \to B} \prd{a:A} \isNull(\mapfunc{f,a}^{n+1})$}; 
\node (P3) at (-3,-4) {$A \to B$};

\draw[->, thick] (P1) to node [above] {$\canon_{n}$} (P2);
\draw[->, thick] (P1) to node [below left] {$\_ \circ \bproj -$} (P3);
\draw[->, thick] (P2) to node [below right] {$\fst$} (P3);

\end{tikzpicture}
 \caption{The canonical map $\canon_n$ as map between fibres} \label{fig:fibres}
\end{center}
\end{figure}

\section{The ``Elementary'' Proof } \label{sec:elementaryproof}

In this section, we give our first proof of Lemma~\ref{lem:reform} (and thereby of Theorem~\ref{thm:mainresult}).
This does not need higher inductive types apart from truncations that already appear in the statement.
The idea is to not prove the result for \emph{any} type $A$ first, but only for an $n$-connected one.\footnote{Recall that a type $A$ is $n$-connected if $\trunc n A$ is contractible~\cite[Def.\ 7.5.1]{HoTTbook}.}
Afterwards, we generalise this to arbitrary types, by splitting the type into its ``connected components'' and gluing together the constructions for the components.

\begin{lemma} \label{lem:forConn}
If $n \geq -1$ be a number, $A$ an $n$-connected type, and $B$ be an $(n+1)$-type, the canonical map $\canon_n$ is an equivalence.
\end{lemma}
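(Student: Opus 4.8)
The plan is to make two formal reductions and then prove the resulting statement by induction on $n$, with the inductive step passing to path spaces.

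For the reductions, note first that $\mathsf{isEquiv}(\canon_n)$ is a proposition and that $\trunc{-1}A$ is inhabited (as $A$ is $n$-connected and $n\geq -1$), so it suffices to prove the lemma assuming a point $a_0 : A$. Given $a_0$, the type $\trunc n A$ is contractible, so evaluation at $\bproj{a_0}$ is an equivalence $\left(\trunc n A \to B\right) \eqvsym B$; and writing $E$ for the codomain $\sm{f : A \to B}\prd{a:A}\isNull(\mapfunc{f,a}^{n+1})$ of $\canon_n$, evaluation of the underlying function at $a_0$ is a map $\mathsf{ev}_{a_0} : E \to B$. Together with $\canon_n$ these fit into a triangle over $B$ that commutes judgmentally by the very construction of $\canon_n$, so it is enough to show $\mathsf{ev}_{a_0}$ is an equivalence. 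It has a section, sending $b$ to the constant function at $b$ paired with the trivial nullness witness (the action of $\mapfunc{}$ on a constant function is reflexivity); and since $B$ is $(n+1)$-truncated, $\Omegat^{n+1}(B,f(a))$ is a set, so $\isNull(\mapfunc{f,a}^{n+1})$ is a proposition and $E$ is a subtype of $A\to B$. Hence $\mathsf{ev}_{a_0}$ is an equivalence precisely when every $(f,p):E$ has $f$ pointwise equal to the constant function at $f(a_0)$; so the lemma comes down to the claim that \emph{for every $f : A \to B$ satisfying \eqref{eq:thm-null}, we have $\prd{a:A}\id{f(a_0)}{f(a)}$.}

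This claim I would prove by induction on $n\geq -1$. The case $n\jdeq -1$ is immediate: as $\Omega^0$ is the identity, \eqref{eq:thm-null} unfolds to $\prd{a,x:A}\id{f(a)}{f(x)}$, and we instantiate $a\defeq a_0$. For the step, assume Lemma~\ref{lem:forConn} at level $n-1\geq -1$, and take $A$ $n$-connected with point $a_0$, $B$ an $(n+1)$-type, and $f$ as above. Fix $a:A$, put $A'\defeq\id[A]{a_0}{a}$ and $B'\defeq\id[B]{f(a_0)}{f(a)}$, and let $g\defeq\mapfunc f : A'\to B'$. Then $B'$ is an $n$-type, and $A'$ is $(n-1)$-connected, since $\trunc{n-1}{A'}\eqvsym\id[\trunc n A]{\bproj{a_0}}{\bproj a}$ is contractible (this identification of a truncated path space is the single place where univalence is used, as announced in the introduction). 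For every $q:A'$, composition with $\opp q$ (the equivalence $r\mapsto\opp q\ct r$) is a pointed equivalence $(A',q)\eqvsym\Omega(A,a)$, hence $\Omegat^n(A',q)\eqvsym\Omegat^{n+1}(A,a)$ and similarly on the $B$-side, and under these equivalences $\mapfunc{g,q}^{n}$ corresponds to $\mapfunc{f,a}^{n+1}$, which is null by \eqref{eq:thm-null}; thus $\prd{q:A'}\isNull(\mapfunc{g,q}^{n})$. Applying Lemma~\ref{lem:forConn} at level $n-1$ to $A'$, $B'$ and $g$, we get that $\canon_{n-1}$ is an equivalence there, so $g$ factors through $\trunc{n-1}{A'}$; since that type is contractible, this produces an element $c_a : B'$, i.e.\ $c_a : \id{f(a_0)}{f(a)}$, and $\lam a c_a$ is the desired inhabitant of $\prd{a:A}\id{f(a_0)}{f(a)}$.

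I expect the main obstacle to be the homotopical bookkeeping inside the inductive step: constructing the pointed equivalence $(A',q)\eqvsym\Omega(A,a)$ by composition with $\opp q$, and, more delicately, verifying that iterating $\Omega$ intertwines $\mapfunc{g,q}^{n}$ with $\mapfunc{f,a}^{n+1}$, so that the nullness hypothesis transfers for \emph{every} basepoint $q$. It is crucial here to pass to the path space $\id[A]{a_0}{a}$ rather than to the loop space $\id[A]{a}{a}$: with loop spaces, the induction hypothesis would only tell us that $\mapfunc{f,a}$ is null for every $a$ (the hypothesis of the $n\jdeq 0$ case), which on its own does not produce an element of $B'$. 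Everything else — the reduction to a pointed $A$, the commuting triangle, the section of $\mathsf{ev}_{a_0}$, and the propositionality of the nullness predicate — is routine.
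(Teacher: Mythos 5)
Your proof is correct and follows essentially the same route as the paper's: induction on $n$, reduction to a pointed $n$-connected $A$, reduction to showing $\prd{a:A}\id{f(a_0)}{f(a)}$ for any $f$ satisfying the nullness condition, and an application of the induction hypothesis to the $(n-1)$-connected path space $\id{a_0}{a}$ with $k \defeq \mapfunc{f}$. The only cosmetic differences are that the paper exhibits the inverse $\lam{\_}f(a_0)$ directly rather than using your section-plus-subtype argument, and transfers the nullness hypothesis to an arbitrary basepoint $q$ of the path space by path induction on $q$ rather than by your explicit conjugation equivalence.
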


\begin{proof}
We do induction on $n$. 
As already discussed above, the case that $n$ is $-1$ is known (e.g.~\cite[Prop.\ 2.2]{kraus_generaluniversalproperty}).

Let now $n \geq 0$ be any given number.
Note that, due to the assumption that $\trunc n A$ is contractible, we have a unique element $x_0 : \trunc n A$, the type $\trunc n A \to B$ is actually equivalent to $B$, and any function $g : \trunc n A \to B$ is uniquely specified by its value $g(x_0)$.

The claim of the lemma is propositional. 
Applying the eliminator of $\trunc n A$, we may not only assume that we are given $x_0 : \trunc n A$, but we can also assume a point $a : A$.
A potential inverse of $\canon_n$ is then given by\footnote{We use $\_$ if we do not need to give the bound variable a name.}
\begin{align}
 &\pinvr_n : \left(\sm{f : A \to B} \prd{a:A} \isNull(\mapfunc{f,a}^{n+1})\right) \to (\trunc n A \to B)\\
 &\pinvr_n(f,p) \defeq \lam \_ f(a).   \label{eq:noname}
\end{align}
To show that $\canon_n$ and $\pinvr_n$ are inverses, we check that both compositions are the identities.
One direction is easy: for any $g : \trunc n A \to B$, we have 
\begin{equation}
 \pinvr_n(\canon_n(g))(x_0) \jdeq g(\bproj a),
\end{equation}
and the latter is equal to $g(x_0)$.

For the other direction, assume we have $f : A \to B$ together with a proof $q$. 
We need to show $\id{(f,q)}{\canon_n(\pinvr_n(f,q))}$. 
Fortunately, the equality of the two second components is automatic thanks to the fact that $\isNull(\mapfunc{f,a}^{n+1})$ is propositional, 
and we only need to prove the equality of $f$ and 
$\fst(\canon_n(\pinvr_n(f,q)))$.
We observe that the latter expression computes 
to $\lam {\_} f(a)$.
Thus, our goal is to show that, for any $a':A$, we have $\id{f(a)}{f(a')}$.

We use the induction hypothesis with $(\id a {a'})$ for $A$, and $\id{f(a)}{f(a')}$ for $B$.
By the connectedness assumption on $A$, the type $\id{\bproj{a}}{\bproj{a'}}$ is contractible. 
Consequently, the type $\trunc{n-1}{\id a {a'}}$ is contractible (\cite[Thm.\ 7.3.12]{HoTTbook}, note that this theorem depends on the univalence axiom). 
Put differently, $(\id a {a'})$ is $(n-1)$-connected.
As $B$ is an $(n+1)$-type, we know that $\id{f(a)}{f(a')}$ is $n$-truncated.
By the induction hypothesis, it is hence enough to construct an element of
\begin{equation}
 \sm{ k : \id a {a'} \to \id{f(a)}{f(a')}} \prd{p : \id a {a'}} \isNull(\mapfunc{k,p}^n).
\end{equation}
For $k$, we choose $\mapfunc f$. 
By path induction, we may assume that $p$ is $\refl a$.
Thus, we need to show that $\mapfunc{\mapfunc {f,a} , \refl a}^n$ is null.
This term is equal to $\mapfunc {f,a}^{n+1}$.\footnote{Depending on the the exact definition of $\mapfunc{}^n$, this can hold judgmentally, but can also be rather involved. We refer to our formalisation for technical details.} \label{page:footnote}
The condition that this function null is exactly what is given by $q(a')$.
\end{proof}

To move from $n$-connected to arbitrary types $A$, we simply split a type into $n$-connected components.
This is very intuitive for $n \jdeq 0$, in which case we use that any type (or ``space'') can be viewed as the ``disjoint sum'' of its connected components.
To be precise, an element of a component is a point of $A$ together with a proof that it is in the component. 
For $n \jdeq 0$, this proof is propositional.
For higher $n$, it is not. 
This makes the general case less intuitive and hard to picture. 
In fact, the proof determines in which component the element is, which makes it seem circular.
Fortunately, it is easier to write down the type-theoretic argument than picturing the topological intuition, as we will see in the following lemma.
\begin{lemma} \label{lem:connpoint-point}
  For any type $A$ and number $n$, we define the family of $n$-connected components,
 \begin{align}
  & \connn n : \trunc n A \to \UU \\
  & \connn n (x) \defeq \sm{a:A} \id[\trunc n A] x {\bproj a}.
 \end{align}
 Then, for any $x : \trunc n A$, the type $\connn n (x)$ is $n$-connected. 
 Further, ``choosing an $n$-connected component and then a point in this component'' corresponds to ``choosing a point'', that is,
 \begin{equation} \label{eq:comp-pt-coco}
  \eqvspace{\sm{x : \trunc n A} \connn n (x)} {A}.
 \end{equation}
\end{lemma}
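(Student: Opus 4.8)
The plan is to establish the two assertions of Lemma~\ref{lem:connpoint-point} one after the other: the equivalence~\eqref{eq:comp-pt-coco} is essentially bookkeeping, whereas the connectedness statement is the part that needs an idea, so I would warm up with~\eqref{eq:comp-pt-coco} first.

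For~\eqref{eq:comp-pt-coco}, I would unfold the left-hand side to $\sm{x:\trunc n A}\sm{a:A}\id[\trunc n A]{x}{\bproj a}$ and swap the two $\Sigma$'s to obtain $\sm{a:A}\sm{x:\trunc n A}\id[\trunc n A]{x}{\bproj a}$. The inner type $\sm{x:\trunc n A}\id[\trunc n A]{x}{\bproj a}$ is a (reversed) based path space and hence contractible, so the whole thing is equivalent to $\sm{a:A}\mathbf 1$, which is $A$. Chasing this construction shows that the resulting equivalence $A \to \sm{x:\trunc n A}\connn n(x)$ sends $a$ to $(\bproj a, (a, \refl{\bproj a}))$; this is not needed for the statement but is a useful sanity check and is the obvious candidate map.

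For the connectedness claim, the key observation is that, after reversing the path in its second component, $\connn n(x) = \sm{a:A}\id[\trunc n A]{x}{\bproj a}$ is nothing but the homotopy fibre of the truncation map $\bproj - : A \to \trunc n A$ over $x$. Since $\trunc n$ preserves equivalences, $\connn n(x)$ is $n$-connected exactly when this fibre is. I would then invoke the fact that $\bproj - : A \to \trunc n A$ is an $n$-connected map (this is \cite[Cor.\ 7.5.8]{HoTTbook}), which by the very definition of an $n$-connected map says precisely that each of its fibres is $n$-connected. Applying this to the fibre over an arbitrary $x : \trunc n A$ finishes the argument.

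The step I expect to be the real obstacle is hidden inside that last citation: if one insists on a self-contained argument, one has to show by hand that $\trunc n{\connn n(\bproj a)}$ is contractible. The natural centre of contraction is $\bproj{(a, \refl{\bproj a})}$, and since the relevant identity type lands in an $n$-type one may use truncation elimination to reduce the contraction to connecting an arbitrary $\bproj{(a', q)}$ — with $q : \id[\trunc n A]{\bproj a}{\bproj{a'}}$ — to the centre. The difficulty is that $q$ is a path \emph{in a truncation}, so one cannot path-induct on it to replace $a'$ by $a$; one must instead feed it through the characterization of paths in truncations (\cite[Thm.\ 7.3.12]{HoTTbook}), which is exactly the ingredient responsible for the appeal to univalence. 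For this reason I would package all of this into the ``fibre of the truncation map'' route above rather than redo it inline.
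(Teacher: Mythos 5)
Your proof is correct. For the equivalence~\eqref{eq:comp-pt-coco} you argue exactly as the paper does: reassociate the $\Sigma$'s and contract the singleton $\sm{x : \trunc n A}\id x {\bproj a}$. For the connectedness claim you take the route that the paper itself mentions only as an alternative, namely identifying $\connn n (x)$ (up to path reversal) with the fibre of $\bproj - : A \to \trunc n A$ over $x$ and citing \cite[Cor.\ 7.5.8]{HoTTbook}, the statement that the truncation map is $n$-connected. The paper's primary argument is instead a short, self-contained one: it exhibits an equivalence $\trunc n {\connn n (x)} \eqvsym \sm{y : \trunc n A}\id x y$ by applying the dependent eliminator of $\trunc n -$ in both directions (sending $\bproj{(a,p)}$ to $(\bproj a , p)$ and back), and the right-hand side is a singleton, hence contractible. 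Both routes are fine; the direct one buys independence from Chapter~7.5 machinery. One remark on your final paragraph: your worry that a self-contained inline proof would have to pass through the characterization of paths in truncations (\cite[Thm.\ 7.3.12]{HoTTbook}), and hence univalence, is unfounded. That obstacle only appears because you fix the endpoint $\bproj{a'}$ and then try to induct on a path $q$ lying in the truncation. The paper's trick is precisely to \emph{not} fix the endpoint: comparing $\trunc n {\connn n (x)}$ with the full singleton $\sm{y : \trunc n A}\id x y$ lets the dependent eliminator (into a family of $n$-types) do all the work, with no univalence and no path characterization needed. Since \cite[Cor.\ 7.5.8]{HoTTbook} is likewise univalence-free, your delegation to it is harmless, but the caveat you attach to it is not actually a real obstacle.
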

\begin{proof}
 This is easy and standard.
 For the first part, we claim that the equivalence
 \begin{equation} \label{eq:con-is-con}
  \eqvspace{\trunc n {\sm{a:A} \id[\trunc n A] x {\bproj a}}}  {\sm{y:\trunc n A} \id[\trunc n A] x y}  
 \end{equation}
holds, where the left-hand type is $\trunc n {\connn n (x)}$ by definition, and the right-hand type has the form of a \emph{singleton}.\footnote{If $z_0 : Z$ is some point of some type, we call any type of the form $\sm{z:Z} \id z {z_0}$ a \emph{singleton}. It is well-known that singletons are contractible and therefore ``neutral'' components of $\Sigma$-types, which we use here and later.}
For both directions of \eqref{eq:con-is-con}, we apply the dependent eliminator of $\trunc n -$.
From left to right, we map $\bproj{(a,p)}$ to $(\bproj a , p)$.
From right to left, we map $(\bproj a , p)$ to $(\bproj {a,p})$.
For an alternative proof, see~\cite[Cor.\ 7.5.8]{HoTTbook}.

To see that the equivalence \eqref{eq:comp-pt-coco} holds, it is enough to unfold the definition of $\connn n$, and use that in $\sm{x : \trunc n A} \sm{a : A} \id[\trunc n A] x {\bproj a}$, the first and the third component form a singleton.
\end{proof}

Finally, we can complete the first proof of our main result:
\begin{proof}[``Elementary'' proof of Lemma~\ref{lem:reform}]
 Assume we have $n$, $A$, and $B$ as in the statement.
 The preceding two lemmata tell us that, for any $x : \trunc n A$, the canonical map
 \begin{equation}
  \canon_n^x : B \to \left(\sm{f_x : \connn n (x) \to B} \prd{y : \connn n (x)} \isNull(\mapfunc {f_x , y}^{n+1}) \right)
 \end{equation}
 is an equivalence (note that we have omitted the contractible type $\trunc n {\connn n (x)}$ in the domain of $\canon_n^x$).
 A family of equivalences gives rise to an equivalence of families, so that we get that the map
 \begin{align}
  & \tilde\canon_n : (\trunc n A \to B) \; \to  \; \big(\prd{x : \trunc n A}    \sm{g_x : \connn n (x) \to B} \prd{y : \connn n (x)} \isNull(\mapfunc {g,y}^{n+1})    \big) \\
  & \tilde\canon_n(k) \defeq \lam x \canon_n^x(k(x))   \label{eq:semi:anothercm}
 \end{align}
is also an equivalence.

All we need at this point is an equivalence from the codomain of the function~\eqref{eq:semi:anothercm} to the type stated in the theorem, 
i.e.\ $\sm{f : A \to B} \prd{a:A} \isNull(\mapfunc{f,a}^{n+1})$,
and the composition of~\eqref{eq:semi:anothercm} and this equivalence has to be the canonical map $\canon_n$.
We calculate:
\begin{alignat}{4}
 &&&  \prd{x : \trunc n A}    \sm{g_x : \connn n (x) \to B} \prd{y : \connn n (x)} \isNull(\mapfunc {g_x,y}^{n+1})  \\
 \intertext{(by the distributivity law)}
 & \eqvsym & \quad & \sm{g : \prd{x : \trunc n A} (\connn n (x) \to B)} \prd{x : \trunc n A} \prd{y : \connn n (x)} \isNull(\mapfunc {g(x),y}^{n+1})  \\
 \intertext{(by currying and using the canonical equivalence~\eqref{eq:comp-pt-coco})} 
 & \eqvsym & & \sm{h : A \to B} \prd{a:A} \isNull(\mapfunc {\lam {y : \connn n (\bproj a)} h(\fst y) , (a , \refl{\bproj a})   }^{n+1})  \\
 \intertext{
 Fortunately, the (pointed) types $\Omega^{n+1}(\connn n (\bproj a) , (a , \refl{\bproj a}))$ and $\Omega^{n+1}(A,a)$ are equivalent, with the equivalence being $\mapfunc {\fst}^{n+1}$; this is an easy technical statement that follows from~\cite[Lem.\ 5.1]{krausSattler_universes}. 
 If we compose $\mapfunc {\lam {y : \connn n (\bproj a)} h(\fst y) , (a , \refl{\bproj a})   }^{n+1}$ with the inverse of this equivalence, functoriality of $\mapfunc{}^{n+1}$ allows us to simplify the expression.}
 & \eqvsym & & \sm{h : A \to B} \prd{a:A} \isNull(\mapfunc {h,a}^{n+1})
\end{alignat}
We need to check that the composition of $\tilde\canon_n$ with this equivalence is indeed the canonical function $\canon_n$. 
This is immediate as we only need to check that the first component (the map $A \to B$) turns out to be the correct function, as the second component is propositional.
\end{proof}

\section{The ``HIT Proof''}   \label{sec:hitproof}

Our second proof is fairly technical.
We construct a higher inductive type with a suitable elimination property and show that it is equivalent to $\trunc n A$.
As a preparation, we show a small lemma.
It is a part of a theorem that has been introduced in~\cite{nicolai:thesis}, where it is described as \emph{local generalised Hedberg argument}.
% This generalisation of~\cite[Thm.\ 7.2.2]{HoTTbook} is straightforward, but nevertheless very useful:
\begin{lemma}[main part of {\cite[Thm.\ 3.2.1]{nicolai:thesis}}]  \label{lem:glha}
 Let $(A,a_0)$ be a pointed type.
 Assume further that $P$ is a \emph{pointed} family of $(n-1)$-types over $(A,a_0)$, that is, a family
  $P : A \to \UUt{n-1}$
 with a point
 $p_0 : P(a_0)$.
 If $P(a)$ implies that $a_0$ is equal to $a$, i.e.\
  $m : \prd{a:A} P(a) \to \id{a_0}{a}$,
 then $A$ is ``locally an $n$-type'' in the sense that $\Omega^{n+1}(A,a_0)$ is contractible.\footnote{This ``local'' form directly implies the ``global'' form: We can consider a relation $R : A \times A \to \UUt{n-1}$ which implies identity and which has points $r_a : R(a,a)$ for all $a:A$; then, the lemma shows that $A$ is an $n$-type.}
\end{lemma}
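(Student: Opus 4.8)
The plan is to run a \emph{generalised Hedberg argument}: I will exhibit $\id{a_0}{a}$ as a retract of $P(a)$ for every $a : A$, so that it inherits the $(n-1)$-truncatedness of $P(a)$, and then read off contractibility of $\Omega^{n+1}(A,a_0)$ from the fact that $\id{a_0}{a_0}$ is then an $(n-1)$-type. This uses no higher inductive types and not even function extensionality — only transport, path induction, and the standard fact (\cite[Thm.\ 7.1.4]{HoTTbook}) that a retract of an $(n-1)$-type is again an $(n-1)$-type.

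First I would set up a section by transport: let $s \defeq \lam a \lam {p : \id{a_0}{a}} \transfib{P}{p}{p_0}$, so that $s$ has type $\prd{a:A}\bigl(\id{a_0}{a} \to P(a)\bigr)$. For the retraction one cannot take $m(a,-)$ directly, since $m(a_0,p_0)$ need not be $\refl{a_0}$; the Hedberg trick is to correct by this loop. Writing $c \defeq m(a_0,p_0) : \id{a_0}{a_0}$, set $r \defeq \lam a \lam {u : P(a)} \opp c \ct m(a,u)$, of type $\prd{a:A}\bigl(P(a) \to \id{a_0}{a}\bigr)$. The central step is to check that $r(a)$ is a retraction of $s(a)$, i.e.\ to prove $\prd{a:A}\prd{p:\id{a_0}{a}} \id{\opp c \ct m(a,\transfib{P}{p}{p_0})}{p}$; this goes through by path induction based at $a_0$, where for $a \jdeq a_0$ and $p \jdeq \refl{a_0}$ the left-hand side reduces to $\opp c \ct m(a_0,p_0) \jdeq \opp c \ct c$, which equals $\refl{a_0}$ by the groupoid laws. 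Thus $\id{a_0}{a}$ is a retract of $P(a)$ for each $a:A$.

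Since $\id{a_0}{a}$ is a retract of $P(a) : \UUt{n-1}$, the cited fact shows that $\id{a_0}{a}$ is an $(n-1)$-type for every $a:A$; in particular $\id{a_0}{a_0}$, the underlying type of $\Omega(A,a_0)$, is an $(n-1)$-type. I then invoke the elementary observation that whenever the underlying type of a pointed type $(X,x_0)$ is an $m$-type (for $m \geq -1$), the loop space $\Omega^{m+1}(X,x_0)$ is contractible: its underlying type is obtained by iterating identity types $(m+1)$ times starting from an $m$-type, hence is $(-1)$-truncated, and it carries a point (the iterated $\refl$), so it is contractible. Applying this with $(X,x_0) \defeq \Omega(A,a_0)$ and $m \defeq n-1$ yields that $\Omega^{n}\bigl(\Omega(A,a_0)\bigr)$, i.e.\ $\Omega^{n+1}(A,a_0)$, is contractible. (The case $n \jdeq -1$ is degenerate and immediate: then each $P(a)$ is contractible, hence inhabited, so $m$ yields $\id{a_0}{a}$ for every $a$ and $A$ itself is contractible.)

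The one genuinely delicate point is the design of the retraction $r$ — the correction by the fixed loop $\opp c$ — together with the short path induction showing that $r(a)\circ s(a)$ is the identity; everything afterwards is routine bookkeeping with truncation levels. In a formalisation one should additionally keep track of whether $\transfib{P}{\refl{a_0}}{p_0} \jdeq p_0$ and $\opp c \ct c \jdeq \refl{a_0}$ hold judgmentally or require explicit (standard) paths.
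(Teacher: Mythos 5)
Your proof is correct and follows essentially the same route as the paper's: exhibit $\id{a_0}{a}$ as a retract of $P(a)$ via the transport section $q \mapsto \transfib{P}{q}{p_0}$ and the map $m_a$ corrected by the loop $c \defeq m(a_0,p_0)$, verify the retraction equation by path induction, and then read off contractibility of $\Omega^{n+1}(A,a_0)$ from the resulting $(n-1)$-truncatedness of $\id{a_0}{a_0}$. If anything, your write-up is slightly more careful than the paper's sketch: the paper writes the correction map as $q \mapsto m_{a_0}(p_0)\ct q$, whereas your $q \mapsto \opp{c}\ct q$ is the version that actually reduces to $\refl{a_0}$ under path induction, and you also treat the degenerate case $n \jdeq -1$ explicitly.
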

\begin{proof}[Proof sketch]
 Consider the following composition of three maps, for any $a:A$:
\begin{center}
\begin{tikzpicture}[x=\textwidth/40,y=\textwidth/40]
\node (P1) at (-18,0) {$\id {a_0} a$}; 
\node (P2) at (-3,0) {$P(a)$}; 
\node (P3) at (5,0) {$\id {a_0} a$};
\node (P4) at (18,0) {$\id {a_0} a$};
\draw[->, thick] (P1) to node [above] {$q \mapsto \transfib P q {p_0}$} (P2);
\draw[->, thick] (P2) to node [above] {$m_a$} (P3);
\draw[->, thick] (P3) to node [above] {$q \mapsto m_{a_0}(p_0) \ct q$} (P4);
\end{tikzpicture}
\end{center}
By path induction, we easily see that these maps make $\id{a_0}{a}$ a retract of $P(a)$. Hence, the former is $(n-1)$-truncated~\cite[Thm.\ 7.1.4]{HoTTbook}, which shows the claim~\cite[Thm.\ 7.2.9]{HoTTbook}.
\end{proof}
We are ready to define the higher inductive type that plays the central role in the second proof of~Lemma~\ref{lem:reform}.
For the following definition and for the rest of the section, we fix a type $A$ and a number $n \geq -1$. 
\begin{definition}
Define the higher inductive type $H$, which depends on $A$ and $n$, as given by the constructors 
  \begin{align}  
   \eta &: A \to H
   \\
   \epsilon &: \prd{a,b : A}  \left( \trunc{n-1}{\id a b} \to \id{\eta(a)}{\eta(b)} \right) \label{eq:epsilon}
   \\
   \delta &: \prd{a:A}  \left( \id[\id{\eta(a)}{\eta(a)}] {\refl{\eta(a)}} {\epsilon (a,a,\bproj{\refl a})}\right)  \label{eq:delta}
   \\
   t &: \istype {(n+1)} (H).
 \end{align}
\end{definition}

The complicated looking constructors $\epsilon$ and $\delta$ are more intuitive than they looks at first sight.
If we have $(\id a b)$, we of course always get a proof of $\id{\eta(a)}{\eta(b)}$ using $\mapfunc \eta$.
The constructor $\epsilon$ says that $\trunc{n-1}{\id a b}$ is sufficient, while $\delta$ ensures that $\epsilon$ is really a lifting of $\mapfunc \eta$ through $\trunc{n-1}{\id a b}$.
This is because we could have used the expanded form
 \begin{equation}  
  \delta' : \prd{a,b : A} \prd{p : \id a b}   \left(\id[\id{\eta(a)}{\eta(b)}] {\mapfunc{\eta}(p)} {\epsilon (a,b,\bproj{p})}\right),   \label{eq:delta-expanded}
 \end{equation}
instead of the constructor $\delta$. 
By path induction on $p$, the type \eqref{eq:delta-expanded} is easily seen to be equivalent to the original type \eqref{eq:delta}.
While \eqref{eq:delta-expanded} might look more regular next to \eqref{eq:epsilon}, we choose \eqref{eq:delta} just for simplicity.

The recursion principle for $H$ is straightforward to write down.
Given some $(n+1)$-type $B$, we need a function $f : A \to B$, together with a function 
\begin{equation}
 k : \prd{a,b:A} (\trunc{n-1}{\id a b}) \to \id{f(a)}{f(b)} 
\end{equation}
and a proof 
\begin{equation}
h : \prd{a:A} \id[\id{f(a)}{f(a)}] {\refl{f(a)}} {k (a,a,\bproj{\refl{f(a)}})}, 
\end{equation}
we get a function $H \to B$ with the expected properties.
It is more involved, nevertheless not inherently difficult, to state the induction principle following the standard (``intuitive'') approach as used in~\cite[Chap.\ 6]{HoTTbook}.
Given an $(n+1)$-truncated family $P : H \to \UUt {n+1}$, in order to prove $\prd{x:H}P(x)$, we need
\begin{align}
  \overline{\eta} &: \prd{a:A} P(\eta(a))   \label{eq:eta-compl-type}
  \\
  \overline{\epsilon} &: \prd{a,b:A} \prd{q : \trunc{n-1}{\id a b}}  \label{eq:eps-compl-type}
           \id[P(\eta(b))] 
             {\transFib {P}{\epsilon(a,b,q)}{\overline{\eta}(a)}}
             {\overline{\eta}(b)}
  \\
  \overline{\delta} &: \prd{a:A}                                     \label{eq:delta-compl-type}
        \left(\transFib{\lam r   \id   {\transfib{P}{r}{\overline{\eta}(a)}}   {\overline{\eta}(a)}   }
                   {\delta(a)}
                   {\refl{\overline{\eta}(a)} \right)
        =
        \overline{\epsilon}(a,a,\bproj{\refl a})} .
\end{align}
The above type expressions look rather involved.
Fortunately, we do not need to deal too much with them at all because we are only interested in the case that $P$ is $n$-truncated (instead of, more generally, $(n+1)$-truncated), which enables us to use the following observation:
\begin{lemma}[Restricted dep. universal property of $H$] \label{lem:ignore-constructors}
 Given $A$ and $n\geq -1$ as above and a family of $n$-types, $P : H \to \UUt n$, the canonical map 
 \begin{equation}
  \prd{x:H}P(x) \xrightarrow{\_ \circ \eta} \prd{a:A}P(\eta(a))
 \end{equation}
 is an equivalence.
\end{lemma}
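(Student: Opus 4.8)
The plan is to apply the full dependent elimination principle of $H$ --- the one asking for the data \eqref{eq:eta-compl-type}--\eqref{eq:delta-compl-type} --- and to exploit that, when the target family consists of $n$-types, the coherence data attached to the constructors $\epsilon$, $\delta$, and $t$ is entirely forced by the datum attached to $\eta$. First I would construct a candidate inverse of $\_ \circ \eta$: from $\overline{\eta} : \prd{a:A}P(\eta(a))$ produce a section $r(\overline{\eta}) : \prd{x:H}P(x)$. Since every $n$-type is in particular an $(n+1)$-type (\cite{HoTTbook}), the eliminator of $H$ is available for $P$, and the truncation constructor $t$ imposes nothing beyond $P$ being a family of $(n+1)$-types; so it remains to supply $\overline{\epsilon}$ of type \eqref{eq:eps-compl-type} and $\overline{\delta}$ of type \eqref{eq:delta-compl-type}.

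For $\overline{\epsilon}(a,b,q)$: its type is an identity type in the $n$-type $P(\eta(b))$, hence an $(n-1)$-type, so by the dependent eliminator of $\trunc{n-1}{-}$ it suffices to define it on inputs of the form $\bproj p$ with $p : \id a b$; by path induction we may assume $p$ is $\refl a$ (and $b$ is $a$), and then $\mapfunc{\lam r \transfib P r {\overline{\eta}(a)}}(\delta(a))$, after inversion, provides the required path from $\transfib P {\epsilon(a,a,\bproj{\refl a})}{\overline{\eta}(a)}$ to $\overline{\eta}(a)$. For $\overline{\delta}(a)$: by this very construction $\overline{\epsilon}(a,a,\bproj{\refl a})$ \emph{is} that path, while the left-hand side of \eqref{eq:delta-compl-type} unfolds --- using the standard description of transport in a family of the shape $r \mapsto \id{\transfib P r c}{c'}$ (\cite{HoTTbook}) --- to the same path up to the associativity and unit laws of path composition; so $\overline{\delta}$ amounts to path algebra. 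This yields $r(\overline{\eta})$, and the computation rule of $H$ on $\eta$ (judgmental in the best case, an explicit path otherwise) gives $r(\overline{\eta}) \circ \eta = \overline{\eta}$, so $(\_ \circ \eta) \circ r$ is pointwise the identity.

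To upgrade this section to an equivalence I would prove the matching uniqueness statement: whenever $u, v : \prd{x:H}P(x)$ satisfy $\prd{a:A}\id{u(\eta(a))}{v(\eta(a))}$, then $u = v$. Here one eliminates $H$ into the family $Q(x) \defeq \id{u(x)}{v(x)}$, which consists of $(n-1)$-types and hence of $(n+1)$-types, so the eliminator applies; the datum for $\eta$ is the assumed homotopy, and the data for $\epsilon$ and $\delta$ are produced exactly as above --- their types are now $(n-2)$- resp.\ $(n-3)$-types, so the same argument goes through (and simplifies further for small $n$, the data being automatic once the relevant level drops to $-2$). Instantiating this with $v \defeq r(u \circ \eta)$ and using that $(\_ \circ \eta) \circ r$ is pointwise the identity, we obtain that $r \circ (\_ \circ \eta)$ is pointwise the identity as well; hence $r$ is a quasi-inverse of $\_ \circ \eta$ and the lemma follows.

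The conceptual heart --- ``the $\epsilon$/$\delta$ coherence collapses because $P$ is $n$-truncated'' --- is simple, so I expect the main obstacle to be bookkeeping: carefully tracking the transports and the interaction of path induction with the $(n-1)$-truncation eliminator when verifying $\overline{\delta}$ (and its analogue over $Q$), and spelling out precisely why the constructor $t$ contributes no coherence obligation for an $(n+1)$-truncated target family. A further, minor nuisance is that, absent a judgmental computation rule for $\eta$, the homotopy exhibiting $(\_ \circ \eta) \circ r$ as the identity, and the various ``$\overline{\eta}$-data'', have to be threaded through as explicit paths rather than reflexivities.
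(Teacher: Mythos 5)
Your proposal is correct and follows essentially the same route as the paper: reduce the domain of $\overline{\epsilon}$ via the universal property of $\trunc{n-1}{-}$ plus path induction, observe that $\delta$ then forces both $\overline{\epsilon}$ and $\overline{\delta}$, and verify the two round trips using the computation rule on $\eta$ and the (restricted) eliminator itself. The only difference is presentational: the paper packages the collapse of the $\epsilon$/$\delta$ data as ``the reduced pair forms a family of singletons,'' whereas you make the canonical choice explicitly and then prove the uniqueness of sections agreeing on $\eta$ as a separate elimination --- these amount to the same argument.
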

\begin{proof}
 As $P$ is a family of $n$-types, the type
$\id[P(\eta(b))]
            {\transFib {P}{\epsilon(a,b,q)}{\overline{\eta}(a)}}
            {\overline{\eta}(b)}$,
appearing in \eqref{eq:eps-compl-type} as the target of $\overline{\epsilon}$,
is $(n-1)$-truncated.
 By the standard universal property of the $(n-1)$-truncation, we may thus assume that the $q$ in the type \eqref{eq:eps-compl-type} is of the form $\bproj p$ with $p : \id a b$, and then do path induction on $p$. 
 This shows that the type of $\overline{\epsilon}$ is equivalent to
 \begin{equation} \label{eq:eps-simpl-type}
  \overline{\epsilon}'' : \prd{a:A}  
           \id[P(\eta(a))] 
             {\transFib {P}{\epsilon(a,a,\bproj {\refl a})}{\overline{\eta}(a)}}
             {\overline{\eta}(a)}.
 \end{equation}
 Under this equivalence, the type of $\overline{\delta}$ becomes
 \begin{equation} \label{eq:delta-simpl-type}
  \overline{\delta}'' : \prd{a:A}
        \left(\transFib{\lam r   \id   {\transfib{P}{r}{\overline{\eta}(a)}}   {\overline{\eta}(a)}   }
                   {\delta(a)}
                   {\refl{\overline{\eta}(a)} \right)
        =
        \overline{\epsilon}''(a)}.
 \end{equation}
 We see that the dependent pair of \eqref{eq:eps-simpl-type} and \eqref{eq:delta-simpl-type} forms a family of singletons.
 Therefore, there is always a canonical and unique choice for $\overline{\epsilon}$ and $\overline{\delta}$.
 The induction principle can therefore 
 be simplified to only \eqref{eq:eta-compl-type}. 
 Let us write $\sind : \prd{a:A}P(\eta(a)) \to \prd{x:H}P(x)$ for this \emph{restricted induction principle}.
 It is easy to check that $\sind$ is indeed an inverse of the map $\_ \circ \eta$:
 \begin{itemize}
  \item For any $f : \prd{a:A}P(\eta(a))$ and $a:A$, the expression $(\sind(f) \circ \eta)(a)$ can be reduced to $f(a)$.
  \item For any $g : \prd{x : H}P(x)$, assume $x:H$. We need to show $\id{(\sind (g \circ \eta)) (x)}{g(x)}$. Using the restricted induction principle, we may assume $x \jdeq \eta(a)$, and the left side can be reduced to the right side of the equation.\qedhere
 \end{itemize}
\end{proof}

This allows us to conclude the following crucial property of $H$:
\begin{lemma} \label{lem:H-truncated}
 The type $H$ is $n$-truncated. 
\end{lemma}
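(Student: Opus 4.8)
The plan is to bump the a-priori $(n+1)$-truncatedness of $H$, which is supplied by the constructor $t$, down to $n$-truncatedness by invoking the ``global'' form of Lemma~\ref{lem:glha} recorded in its footnote: it is enough to exhibit a relation $R : H \to H \to \UUt{n-1}$ which is \emph{reflexive}, i.e.\ comes with $\prd{x:H} R(x,x)$, and which \emph{implies identity}, i.e.\ comes with $\prd{x,y:H} \big(R(x,y) \to \id{x}{y}\big)$. The crucial observation is that the constructor $\epsilon$ is already essentially the ``implies identity'' datum: on points of the form $\eta(a)$ and $\eta(b)$ it is precisely a map $\trunc{n-1}{\id a b} \to \id{\eta(a)}{\eta(b)}$. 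So the natural choice is $R(\eta(a),\eta(b)) \defeq \trunc{n-1}{\id a b}$, suitably extended to all of $H \times H$ by means of the restricted universal property of Lemma~\ref{lem:ignore-constructors}.

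To make this precise, recall that the universe $\UUt{n-1}$ is an $n$-type~\cite[Thm.\ 7.1.11]{HoTTbook}, and hence so is $H \to \UUt{n-1}$, being a function type into an $n$-type. Therefore both $\lam{\_}(H \to \UUt{n-1})$ and $\lam{\_}\UUt{n-1}$ are families of $n$-types over $H$, and two applications of Lemma~\ref{lem:ignore-constructors} (one in each $H$-argument) reduce the construction of $R : H \to H \to \UUt{n-1}$ to giving, for all $a,b:A$, the $(n-1)$-type $\trunc{n-1}{\id a b}$; concretely, $R \defeq \sind\big(\lam a \sind(\lam b \trunc{n-1}{\id a b})\big)$, so that $R(\eta(a),\eta(b))$ reduces to $\trunc{n-1}{\id a b}$. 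For reflexivity, the family $\lam x R(x,x)$ takes values in $\UUt{n-1}$, hence is a family of $n$-types, so by Lemma~\ref{lem:ignore-constructors} it suffices to produce $\prd{a:A} R(\eta(a),\eta(a))$, which we do by $a \mapsto \bproj{\refl a}$. For ``implies identity'', note that the constructor $t$ already tells us that $\id{x}{y}$ is an $n$-type for all $x,y:H$, so $R(x,y)\to\id{x}{y}$ is an $n$-type and thus $\lam x \prd{y:H}\big(R(x,y)\to\id{x}{y}\big)$ is a family of $n$-types over $H$; applying Lemma~\ref{lem:ignore-constructors} twice, the construction of $\prd{x,y:H}\big(R(x,y)\to\id{x}{y}\big)$ reduces to giving $\prd{a,b:A}\big(\trunc{n-1}{\id a b} \to \id{\eta(a)}{\eta(b)}\big)$ — which is exactly the constructor $\epsilon$.

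With $R$ now a reflexive, identity-implying family of $(n-1)$-types, the global form of Lemma~\ref{lem:glha} yields that $H$ is an $n$-type. I do not expect any genuine difficulty beyond the bookkeeping of verifying, at each use of Lemma~\ref{lem:ignore-constructors}, that the family involved really is a family of $n$-types; this is precisely where the constructor $t$ (providing $(n+1)$-truncatedness, hence $n$-truncated identity types) and the fact that $\UUt{n-1}$ is an $n$-type are used. Note that the constructor $\delta$ plays no role in this argument; it will only become relevant later, when $H$ is identified with $\trunc n A$.
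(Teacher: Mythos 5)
Your proposal is correct and takes essentially the same route as the paper: the key step in both is that the constructor $\epsilon$ supplies the ``implies identity'' datum for Lemma~\ref{lem:glha}, with Lemma~\ref{lem:ignore-constructors} doing the bookkeeping of extending everything from points of the form $\eta(a)$ to all of $H$. The only difference is that you invoke the global form of Lemma~\ref{lem:glha} via a doubly-indexed relation $R$, whereas the paper first uses the restricted induction principle to reduce the goal to contractibility of $\Omega^{n+1}(H,\eta(a_0))$ and then applies the local form with the singly-indexed family $Q(\eta(a)) \defeq \trunc{n-1}{\id{a_0}{a}}$; the paper's own footnote notes these two forms are interderivable, so this is a cosmetic variation.
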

\begin{proof} 
 It suffices to show that $\Omega^{n+1}(H,x)$ is contractible for all $x:H$~\cite[Lem.\ 7.2.9]{HoTTbook}.
 The restricted induction principle of $H$ tells us that, in order to show
  $P(x) \defeq \istype {-2} \left(\Omega^{n+1}(H,x)\right)$
 for all $x$, we only need to prove $P(\eta(a_0))$ for any $a_0:A$.
 Let us define a type family 
  $Q : H \to \UUt{n-1}$
 using the restricted induction principle,
  $Q(\eta(a)) \defeq \trunc {n-1}{\id{a_0}a}$.
 This family is trivially inhabited at $a_0$.
 We want to show that $Q$ implies local equality in the sense of 
  $\prd{x:H} \left( Q(x) \to \id{\eta(a_0)} x\right)$,
and as this type family is $n$-truncated, we apply the restricted induction principle again and the goal becomes
\begin{equation}
  \prd{a:A} \left( Q(\eta(a)) \to \id {\eta(a_0)}{\eta(a)}\right).
 \end{equation}
 By definition of $Q$, this is exactly given by the constructor $\epsilon$, applied on $a_0$ and $a$.

 This allows us to conclude, by Lemma~\ref{lem:glha}, that $H$ is $n$-truncated, as claimed.
\end{proof} 
 
It is straightforward and standard that an $n$-truncated type which satisfies the dependent eliminating principle of $\trunc n A$ is necessarily equivalent to $\trunc n A$, and we record:
\begin{corollary} \label{cor:AH}
 The types $H$ and $\trunc n A$ are equivalent.
\end{corollary}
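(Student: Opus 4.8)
The plan is to exhibit maps $\phi : \trunc n A \to H$ and $\psi : H \to \trunc n A$ and show they are mutually inverse, using only the restricted induction principle $\sind$ of $H$ (Lemma~\ref{lem:ignore-constructors}) together with the fact that $H$ is $n$-truncated (Lemma~\ref{lem:H-truncated}), so that at no point do we have to supply the higher data $\overline\epsilon$ and $\overline\delta$ by hand. First I would build $\phi$: since $H$ is an $n$-type by Lemma~\ref{lem:H-truncated}, the ordinary recursion principle of $\trunc n -$ applied to $\eta : A \to H$ yields a map $\phi$ with $\phi(\bproj a) = \eta(a)$. In the other direction, I would apply $\sind$ to the constant family $P(x) \defeq \trunc n A$ — which is a family of $n$-types — and to the map $\bproj - : A \to \trunc n A$, producing $\psi \defeq \sind(\lam a \bproj{a}) : H \to \trunc n A$ with $\psi(\eta(a)) = \bproj{a}$, the latter being the computation rule for $\sind$ verified in the proof of Lemma~\ref{lem:ignore-constructors}.

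Next I would check $\psi \circ \phi = \mathrm{id}_{\trunc n A}$. Both sides are maps $\trunc n A \to \trunc n A$, so it suffices to show they agree pointwise; since the identity types of $\trunc n A$ are $(n-1)$-truncated, hence $n$-truncated, the induction principle of $\trunc n -$ reduces this to checking $\psi(\phi(\bproj a)) = \bproj{a}$, which holds because $\psi(\phi(\bproj a)) = \psi(\eta(a)) = \bproj{a}$. For the other composite I would prove $\prd{x:H}\id{\phi(\psi(x))}{x}$: the family $x \mapsto \id[H]{\phi(\psi(x))}{x}$ is a family of $n$-types because $H$ is $n$-truncated, so by $\sind$ it is enough to treat $x \jdeq \eta(a)$, where $\phi(\psi(\eta(a))) = \phi(\bproj a) = \eta(a)$. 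Assembling $\phi$, $\psi$ and these two homotopies gives $H \eqvsym \trunc n A$.

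There is no serious obstacle here; the one point that needs attention is to observe, at each use of an eliminator out of $H$, that the relevant fibrewise type is an $n$-type (the constant family $\trunc n A$, or an identity type of the $n$-truncated $H$), so that Lemma~\ref{lem:ignore-constructors} applies. A secondary, purely bureaucratic point is that — since we do not assume a judgmental computation rule for the truncation — the equation $\phi(\bproj a) = \eta(a)$ holds only up to a path; this is harmless, as every identity above is needed only up to a path anyway.
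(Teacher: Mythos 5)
Your argument is correct and is precisely the ``straightforward and standard'' argument that the paper invokes without spelling out: an $n$-truncated type satisfying the (restricted) dependent elimination principle of $\trunc n A$ is equivalent to $\trunc n A$, with both composites checked by eliminating into families of identity types, which are $n$-types. Nothing to add.
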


At the same time, we have the following:
\begin{lemma}[Universal property of $H$] \label{lem:semisimp:hit-up}
For any $(n+1)$-type $B$, the type of functions $H \to B$ is equivalent to
\begin{equation}
\begin{alignedat}{1}
  &\sm{f : A \to B} \\
  &\sm{e : \prd{a,b:A} \trunc{n-1}{\id a b} \to \id{f(a)}{f(b)}} \\
  &\phantom{\Sigma }(d : \prd{a:A} \id{\refl{f(a)}}{e(a,a,\bproj{\refl{a}}} )).
\end{alignedat}
\end{equation}
\end{lemma}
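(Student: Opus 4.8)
The plan is to show that the non-dependent recursion principle of $H$ described above upgrades to an equivalence. Write $T$ for the $\Sigma$-type in the statement, with components named $f$, $e$, $d$ as there. In one direction I would define $\Phi : (H \to B) \to T$ by sending $g$ to the triple
$\big(\, g \circ \eta,\ \lam{a\,b\,q} \mapfunc{g}(\epsilon(a,b,q)),\ \lam a \mapfunc{\mapfunc g}(\delta(a)) \,\big)$,
where in the last slot one uses $\mapfunc{g}(\refl{\eta(a)}) \jdeq \refl{g(\eta(a))}$ together with the definition of the middle component to see that $\mapfunc{\mapfunc g}(\delta(a))$ does have type $\id{\refl{f(a)}}{e(a,a,\bproj{\refl a})}$ for $f \defeq g\circ\eta$. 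In the other direction, $\Psi : T \to (H\to B)$ is precisely the recursion principle of $H$: given $(f,e,d)$, one feeds $f$, $e$, $d$ as the data $\overline{\eta}$, $\overline{\epsilon}$, $\overline{\delta}$ of the dependent induction principle \eqref{eq:eta-compl-type}--\eqref{eq:delta-compl-type} instantiated at the constant family $P \defeq \lam \_ B$, using that transport along a constant family is trivial (the $\mathsf{transportconst}$ paths) to convert the shapes of $\overline{\epsilon}$ and $\overline{\delta}$ into the shapes of $e$ and $d$.

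The bulk of the work is then checking that $\Phi$ and $\Psi$ are mutually inverse. For $\Phi\circ\Psi = \mathrm{id}_T$, I would fix $(f,e,d) : T$ and observe that the three components of $\Phi(\Psi(f,e,d))$ are equal to $f$, $e$, $d$ by the three computation rules of $H$ (for $\eta$, for $\epsilon$, and for $\delta$ respectively); assembling these componentwise identifications into an identification of triples uses function extensionality, plus --- for the $\delta$-component --- a short path calculation to absorb the $\mathsf{transportconst}$ identifications introduced when setting up $\Psi$. For $\Psi\circ\Phi = \mathrm{id}$, I would fix $g : H \to B$ and, by function extensionality, reduce to inhabiting $\prd{x:H}\id[B]{\Psi(\Phi(g))(x)}{g(x)}$. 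The key point is that, since $B$ is an $(n+1)$-type, this is a family of $n$-types, so Lemma~\ref{lem:ignore-constructors} applies and it suffices to give the section at $x \jdeq \eta(a)$; there both sides reduce to $g(\eta(a))$ by the $\eta$-computation rule, so $\refl{g(\eta(a))}$ (or the propositional computation path, in a formalisation without judgemental HIT computation) finishes the argument.

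I expect the main obstacle to be the $\delta$-component throughout --- both in setting up $\Psi$, where the target of $\overline{\delta}$ in \eqref{eq:delta-compl-type} for $P = \lam \_ B$ must be simplified by peeling apart the nested transports in $\lam r \id{\transfib{P}{r}{\overline{\eta}(a)}}{\overline{\eta}(a)}$, and in the direction $\Phi\circ\Psi = \mathrm{id}$, where the coherence between $\mapfunc{\mapfunc g}(\delta(a))$, those $\mathsf{transportconst}$ paths, and the $\epsilon$-computation rule has to be tracked with care. None of this is conceptually difficult; it is routine $2$-path bookkeeping. The one genuine idea is the asymmetry exploited above: the uniqueness direction $\Psi\circ\Phi = \mathrm{id}$ lands in a family of $n$-types and hence can be discharged via Lemma~\ref{lem:ignore-constructors}, side-stepping the full dependent induction principle of $H$ with its unwieldy $\overline{\epsilon}$ and $\overline{\delta}$ obligations.
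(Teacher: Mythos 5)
Your proposal is correct and follows essentially the same route as the paper's (much terser) proof sketch: the forward map is composition with the constructors, giving exactly the triple $(g\circ\eta,\ \mapfunc{g}\circ\epsilon,\ \lam a \mapfunc{\mapfunc g}(\delta(a)))$, the backward map is the recursion principle, and the two round-trips are discharged by the computation and uniqueness rules of $H$. Your additional observation that the uniqueness direction lands in a family of $n$-types and can therefore be handled by Lemma~\ref{lem:ignore-constructors} is a sound and welcome elaboration of what the paper leaves implicit.
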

\begin{proof}[Proof sketch]
The proof of deriving this form of universal property from the induction principle is standard.
The map from $H \to B$ into the stated type is more or less composition with the constructors; for any $k : H \to B$, we get
\begin{equation}
 (f,e,d) \defeq \left(k \circ \eta \, , \, \mapfunc k \circ \epsilon \, , \, \lam a \mapfunc {\mapfunc k}(\delta(a))  \right).
\end{equation}
The map in the other direction is exactly the recursion principle of $H$.
That they are mutually inverse corresponds to the computation ($\beta$) rule respectively the uniqueness ($\eta$) rule of $H$.
\end{proof}

Finally, we can complete the second proof of our main result:
\begin{proof}  [``HIT proof'' of Lemma~\ref{lem:reform}] 
We do induction on $n$.
The base case ($n \jdeq -1$) is, as before, just what we have discussed in Section~\ref{sec:theorem}.
For higher $n$, we have the following chain of equivalences:
\begin{alignat}{2}
 &         &        & \trunc n A \to B \\ 
 \intertext{(by Corollary~\ref{cor:AH})}
 & \eqvsym & \quad & H \to B \\
 \intertext{(by Lemma~\ref{lem:semisimp:hit-up})}
 & \eqvsym &  & \sm{f : A \to B } \sm{e : \prd{a,b:A} \trunc{n-1}{\id a b} \to \id{f(a)}{f(b)}}  \nonumber \\ 
 &         &  & \phantom{\sm{f : A \to B} \Sigma}
                                           \left( \prd{a:A} \id{\refl{f(a)}}{e(a,a,\bproj{\refl a})} \right)
 \intertext{(by ``inverse path induction'')}
 & \eqvsym &  & \sm{f : A \to B } \sm{e : \prd{a,b:A} \trunc{n-1}{\id a b} \to \id{f(a)}{f(b)} } \nonumber \\ 
 &         &  & \phantom{\sm{f : A \to B} \Sigma} 
                                                  \left( \prd{a,b:A} \prd{p : \id a b} \id{\mapfunc f p}{e(a,b,\bproj {p})} \right) \\
 \intertext{(by the distributivity law)}
 & \eqvsym &  & \sm{f : A \to B }  \prd{a,b:A} \big( \sm{e' : \trunc{n-1}{\id a b} \to \id{f(a)}{f(b)} } \nonumber \\ 
 &         &  & \phantom{\sm{f : A \to B } \prd{a,b:A} \Sigma }
                                                                \prd{p : \id a b} \id{\mapfunc f p}{e'(\bproj{p})}  \big) \\
 \intertext{Now we exchange $e'$ by $(e_1,e_2) \defeq \canon_{n-1}(e')$ using
the induction hypothesis, and thus we need to apply $\canon_{n-1}^{-1}$ to that
term in the last component.  Fortunately, it follows from the definition of $\canon_{n-1}$ that $\_ \circ \canon_{n-1} \jdeq \fst \circ \bproj -$, hence we can replace $e'(\bproj{p})$ with simply $e_1(p)$:}
 & \eqvsym &  & \sm{f : A \to B } \prd{a,b:A}      \big( \sm{e_1 : \id a b \to \id{f(a)}{f(b)} } \sm{e_2 : \prd{p : \id a b} \isNull(\mapfunc {{e_1},p}^n) } \nonumber \\ 
 &         &  & \phantom{\sm{f : A \to B } \prd{a,b:A} \big( }
                                                                                    \left( \prd{p : \id a b} \id{\mapfunc f p}{e_1(p)} \right)   \big) \\
 \intertext{The term $e_1$ and the very last (unnamed) component form a singleton and can be removed:}
 & \eqvsym &  & \sm{f : A \to B } \left( \prd{a,b:A}\prd{p : \id a b} \isNull(\mapfunc {{\mapfunc f},p}^n) \right) \\ 
 \intertext{(by ``path induction'')}
 & \eqvsym &  & \sm{f : A \to B } \left( \prd{a:A} \isNull(\mapfunc {{\mapfunc f},{\refl{f(a)}}}^n) \right) \\ 
 \intertext{(as $\mapfunc {\mapfunc f, \refl{a}}^n$ is the same as $\mapfunc {f, a}^{n+1}$ -- the footnote on page \pageref{page:footnote} applies)}
 & \eqvsym &  & \sm{f : A \to B } \left( \prd{a:A} \isNull(\mapfunc {f,{\refl a}}^{n+1}) \right).  \label{eq:goal-reached}
\end{alignat}
Finally, we need to check that the constructed equivalence is indeed the canonical function $\canon_n$.
Fortunately, the second (and more involved) part $\prd{a:A} \isNull(\mapfunc {f,{\refl a}}^{n+1})$ is propositional. 
It is therefore enough to check that any map $g : \trunc n A \to B$ gets, by the constructed equivalence, mapped to a pair in \eqref{eq:goal-reached} of which \emph{the first component} is $g \circ \bproj -$.
But the first component is constructed in the very first step, where Lemma~\ref{lem:semisimp:hit-up} is applied, and, looking at the proof of Lemma~\ref{lem:semisimp:hit-up}, it is indeed simply composition with $\bproj -$.
\end{proof}

\section{A Sample Application: Set-Based Groupoids} \label{sec:application}

A set-theoretic $\omega$-groupoid has, in the ``globular'' formulation, $\omega$-many levels: 
At level $0$, it has a collection of objects (or $0$-cells); for any two objects, it has a collection of $1$-morphisms ($1$-cells); for any two $1$-morphisms, there is a collection of $2$-morphisms ($2$-cells), and so on.
As recalled in the introduction, types indeed are such $\omega$-groupoids meta-theoretically.
It is intuitive to ask how much of this can be internalised.
Defining a weak $\omega$-groupoid in type theory is already very hard~\cite{liAltenRyp,altenRypacek_weakOmegaGrp}: 
one would want a $0$-type (i.e.\ a \emph{set}) $A_0$ of $0$-cells, a \emph{set} $A_1$ of $1$-cells which is indexed twice over $A_0$, and so on.
Even if one has such a definition at hand, it is implausible to expect that one can define the ``fundamental $\omega$-groupoid'' of a type.
As Altenkirch, Li and Rypacek~\cite{liAltenRyp} mention, they are unable to construct such an $\omega$-groupoid, which in their terminology is called $\mathsf{Id}\omega$.
The Ph.D.\ thesis of the second-named author of the current paper includes a precise negative statement~\cite[Sec.\ 9.4.1]{nicolai:thesis} which shows that a construction in the sense of~\cite{liAltenRyp} is impossible in all non-trivial cases.
The argument given there indicates that a fundamental reason why we cannot even define $A_1$ is that we want $A_1$ to be indexed twice over $A_0$.

However, we know that the whole higher structure of types is in some sense determined by the loop spaces, as opposed to the path spaces.
It seems therefore reasonable to consider a more modest variation where we index $A_1$ only once over $A_0$, with the intention that $A_1(a_0)$ represents the loop space over $a_0$.
This has the further advantage that we can assume that $A_0$ is $\trunc 0 A$; with double-indexed $A_1$, it would be possible that elements $a,b: A_0$ are not equal in $A_0$, but ``made equal'' by an element of $A_1(a,b)$.
As a further simplification, we only consider the question whether a type can be represented in two levels, i.e.\ with $A_0 \jdeq \trunc 0 A$ and $A_1$. 
\begin{definition} \label{def:red-pres}
 We call a type $A$ \emph{set-based representable} if the function 
 \begin{align} \label{eq:set-based-function}
 & \omega_A : A \to \UU \\
 & \omega_A(a) \defeq (\id a a)
 \end{align}
factors through $\trunc 0 A$, i.e.\ if there is a single-indexed family
  $A_1 : \trunc 0 A \to \UU$
 of types which, for all $a:A$, satisfies
  $\eqv{A_1(\bproj a)}{(\id[A]{a}{a})}$.
\end{definition}
We also define the following simple notion:
\begin{definition}
 We say that a type $A$ has \emph{loop spaces with braidings} if, for all $a:A$ and $p,q: \id a a$, we have $p \ct q = q \ct p$.
\end{definition}
Examples of types which have loop spaces with braidings are sets (for which the condition is trivial), and, more interestingly, loop spaces themselves.
\begin{theorem} \label{th:abelian-red}
 Every $1$-type whose loop spaces have braidings is set-based representable.
\end{theorem}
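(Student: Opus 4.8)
The plan is to deduce the claim from the case $n \jdeq 0$ of Theorem~\ref{thm:mainresult}. The function $\omega_A$ of Definition~\ref{def:red-pres} takes values in $\UU$, which is not a $1$-type, so the theorem does not apply to it directly. The key observation is that the sub-universe of sets $\UUt 0$ \emph{is} a $1$-type: by univalence, a path in $\UUt 0$ between two sets is an equivalence between them, and the type of such equivalences is again a set. Since $A$ is assumed to be a $1$-type, every loop space $\id a a$ is a set, so there is a well-defined function $f : A \to \UUt 0$ with $f(a) \defeq (\id a a)$, paired with the (propositional) witness that $\id a a$ is a set. Composing $f$ with the first projection $\UUt 0 \to \UU$ yields exactly $\omega_A$, so it suffices to factor $f$ through $\trunc 0 A$.

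First I would apply Theorem~\ref{thm:mainresult} with $n \jdeq 0$, $B \defeq \UUt 0$, and this $f$. By the discussion of the case $n \jdeq 0$ in Section~\ref{sec:theorem}, the factorisation exists precisely when $\mapfunc{f,a}$ is null for every $a : A$, i.e.\ when for all $a : A$ and all $p : \id a a$ the path $\mapfunc{f,a}(p)$ equals $\refl{f(a)}$. Since the fibres of the projection $\UUt 0 \to \UU$ are propositions, this map is an embedding, so $\mapfunc{f,a}(p) = \refl{f(a)}$ is equivalent to the corresponding equation $\mapfunc{\omega_A,a}(p) = \refl{\omega_A(a)}$ taking place in $\UU$.

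The remaining step, which is the heart of the argument, is to identify $\mapfunc{\omega_A,a}(p)$ and show it is trivial. By univalence, together with the standard fact that transport along $\mapfunc{\omega_A}(p)$ in the identity family agrees with $\transfib{\omega_A}{p}{-}$, the path $\mapfunc{\omega_A,a}(p)$ equals $\refl{\omega_A(a)}$ if and only if the auto-equivalence $\transfib{\omega_A}{p}{-}$ of $\id a a$ is the identity equivalence; and transport in the family $\lam x (\id x x)$ along $p$ is conjugation, $\lam \ell (\opp p \ct \ell \ct p)$. As $\id a a$ is a set, this equivalence equals the identity equivalence as soon as its underlying function is pointwise equal to the identity, and here the braiding hypothesis supplies exactly what is needed:
\begin{equation*}
 \opp p \ct \ell \ct p \;=\; \opp p \ct (p \ct \ell) \;=\; (\opp p \ct p) \ct \ell \;=\; \ell ,
\end{equation*}
the first equality using $\id{\ell \ct p}{p \ct \ell}$. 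Hence $\mapfunc{f,a}$ is null for all $a$, and Theorem~\ref{thm:mainresult} produces a lift $f' : \trunc 0 A \to \UUt 0$ with $\id{f' \circ \bproj -}{f}$. Setting $A_1 \defeq \fst \circ f' : \trunc 0 A \to \UU$ we get $\id{A_1(\bproj a)}{(\id a a)}$ for every $a : A$, in particular $\eqv{A_1(\bproj a)}{(\id a a)}$, which is the set-based representation required by Definition~\ref{def:red-pres}.

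I expect the only real obstacle to be the bookkeeping around univalence in the third paragraph: carefully matching $\mapfunc{\omega_A,a}(p)$ with conjugation by $p$, and reducing ``this equivalence equals the identity equivalence'' to the pointwise statement $\opp p \ct \ell \ct p = \ell$ (using that being an equivalence is a proposition, that equivalences between sets are determined by their underlying functions, and function extensionality). The fact that $\UUt 0$ is a $1$-type and the two-line braiding computation are the two substantive points; the rest is routine.
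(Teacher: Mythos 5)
Your proposal is correct and follows essentially the same route as the paper: view $\omega_A$ as landing in the $1$-type $\UUt 0$, apply Theorem~\ref{thm:mainresult} with $n \jdeq 0$, identify $\mapfunc{\omega_A,a}(p)$ (via univalence) with conjugation $\ell \mapsto \opp p \ct \ell \ct p$ on the set $\id a a$, and use the braiding to show this is the identity. The extra care you take with the embedding $\UUt 0 \to \UU$ and with reducing equality of equivalences to pointwise equality of underlying functions is exactly the bookkeeping the paper leaves implicit.
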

\begin{proof}
 As $A$ is a $1$-type, the function~\eqref{eq:set-based-function} takes sets as values; that is, in this case, we can assume that $\omega_A$ is of type $A \to \UUt 0$.
 Using that $\UUt 0$ is a $1$-type~\cite[Thm.\ 7.1.11]{HoTTbook}, we may apply Theorem~\ref{thm:mainresult} with $n \jdeq 0$.
 We need to show that, for a fixed $a:A$, the function
 \begin{equation}
   \mapfunc{\omega_A,a} : \Omegat(A,a) \to \Omegat(\UU , \id a a) \\
 \end{equation}
 is null.
 But $\mapfunc{\omega_A}(p)$ induces a \emph{function} of type $(\id a a) \to (\id a a)$ (via the function that is called $\mathsf{idtoeqv}$ in~\cite{HoTTbook}, and projection), and by univalence, it is enough to show that this function does not depend on $p$.
 We claim that this function maps $q : \id a a$ to $\opp p \ct q \ct p$.
 An easy way to prove this claim is considering the more general version of $\mapfunc {\omega_A}$ that works on any path spaces (instead of loop spaces), and then doing path induction on $p$.
 Clearly, the braiding on $\id a a$ is exactly what we need to justify that $\opp p \ct q \ct p$ does not depend on p.
\end{proof}

\section{The Big Picture: Solved and Unsolved Cases} \label{sec:conclusions}

The ``ordinary'' universal property of the $n$-truncation can be recovered easily from 
Theorem~\ref{thm:mainresult}.
If, under the conditions of the statement, $B$ is not only $(n+1)$-, but even $n$-truncated, the type $\prd{a:A}\isNull\left(\mapfunc{f,a}^{n+1}\right)$ becomes contractible, and the theorem says precisely that functions $A \to B$ are the same as functions $\trunc n A \to B$, via composition with $\bproj -$.
Theorem~\ref{thm:mainresult} is thus stronger than the ``ordinary'' universal property.
However, we weaken the condition on $B$ by only one single level, while~\cite{kraus_generaluniversalproperty} weakens it by arbitrary many levels, but only for the propositional truncation.

Of course, the general question is: 
What is the universal property of $\trunc n A$ with respect to $m$-types, i.e.\ how can we construct a map $\trunc n A \to B$ for some $m$-type $B$?
Put differently, given a function $f: A \to B$, how can we (by only imposing conditions on $f$, not on $A$ or $B$) ensure that $f$ factors through $\trunc n A$?
Figure~\ref{fig:solvedAndOpen} illustrates the current progress on this question.
As indicated, the question is trivial if $m$ is not greater than $n$.
% , as this case is covered by the ``ordinary'' universal property and elimination principle.
Two other families of cases are solved, those with $m \jdeq n + 1$ by the current paper, and $n \jdeq -1$ by~\cite{kraus_generaluniversalproperty}. Note that the latter is not internalised in the way that the result of the current paper is, and it is not to be expected that an internalisation is possible in the considered type theory; and further, the case $n \jdeq -1$, $m \jdeq \infty$ (meaning that there is no condition at all on $B$) is solved, but only under the assumption of Reedy $\omega^{\mathrm op}$-limits.

% How to modify the table:
% the first parameter is the number of rows (= number of columns -1)
% with concrete values (-1, 0, 1, ...)
% the second, third and fourth are the factors by which the 
% left coulmn, top row, and 'dotted' cells are larger than regular cells
% fifth parameter: width of table / width of text. Factor 1 leads to an 
% (invisible) badbox, so we choose no more than 0.99 instead.
% sixth parameter: height of cell in table / width of cell in table
  \def\concreteVals{5}
  \def\leftcol{2}
  \def\toprow{1.4}
  \def\dotsSize{1.3}
  \def\tabwidth{0.9}
  \def\tabheight{1}
  
\begin{figure}[ht]
\begin{center}
\def\tablewidth{\tabwidth * \textwidth / (\concreteVals+2+\leftcol+\dotsSize)}
\def\tableheight{\tabheight * \tablewidth}
\begin{tikzpicture}[x= \tablewidth , y = - \tableheight]
  
  % draw the cells which will carry the labels (i.e.\ the frame)
  \draw [thick] (-\leftcol,-\toprow) -- (-\leftcol,\concreteVals+\dotsSize);
  \draw [thick] (0,-\toprow) -- (0,\concreteVals+\dotsSize);
  \draw [thick] (-\leftcol,-\toprow) -- (\concreteVals+\dotsSize+2,-\toprow);
  \draw [thick] (-\leftcol,0) -- (\concreteVals+\dotsSize+2,0);
  
  \foreach \i in {1, 2, ..., \concreteVals} {
    \draw [thick] (-\leftcol,\i) -- (0,\i);
  }
  \draw [thick] (-\leftcol,\concreteVals+\dotsSize) -- (0,\concreteVals+\dotsSize);
  
  \foreach \i in {0, 1, ..., \concreteVals} {
    \draw [thick] (\i+1,-\toprow) -- (\i+1,0);
  }
  \draw [thick] (\concreteVals+1+\dotsSize,-\toprow) -- (\concreteVals+1+\dotsSize,0);
  \draw [thick] (\concreteVals+\dotsSize+2,-\toprow) -- (\concreteVals+\dotsSize+2,0);

  % draw the missing part of the frame
  \draw [thick] (\concreteVals+\dotsSize+2, 0) -- (\concreteVals+\dotsSize+2,\concreteVals+\dotsSize);
  \draw [thick] (0,\concreteVals+\dotsSize) -- (\concreteVals+\dotsSize+2,\concreteVals+\dotsSize);
  
  % draw the diagonal in the topleft cell -- the following would be the straight diagonal line
  %\draw [thick] (-\leftcol,-\toprow) -- (0,0); 
  % draw a 'economical' diagonal
  \draw [thick] (-\leftcol,-\toprow) -- (-0.8*\leftcol,-\toprow/2) -- (-0.2*\leftcol,-\toprow/2) -- (0,0);
  
  \node () at (-0.4*\leftcol,-0.75*\toprow) {$\istype ? (B)$};
  \node () at (-\leftcol/2,-\toprow/4) {$\trunc ? A$};
  
  % insert the labels
  \foreach \i in {1, 2, ..., \concreteVals} {
    \edef\n{\number\numexpr\i-2\relax}
    \node () at (-\leftcol/2,\i-0.5) {$\n$};
  }
  \foreach \i in {0, 1, ..., \concreteVals} {
    \edef\n{\number\numexpr\i-1\relax}
    \node () at (\i+0.5,-\toprow/2) {$\n$};
  }
  % insert dots ...
  \node () at (-\leftcol/2,\concreteVals+\dotsSize/2) {$\cdots$};
  \node () at (\concreteVals+1+\dotsSize/2,-\toprow/2) {$\cdots$};
  % insert infinity case label
  \node () at (\concreteVals+\dotsSize+1.5,-\toprow/2) {$\infty$};
  
  % draw the diagonal to separate the trivial from the non-trivial part
  \foreach \i in {1, 2, ..., \concreteVals} {
    \draw (\i,\i-1) -- (\i,\i) -- (\i+1,\i);
  } 
  % ... and the dotted part of the diagonal
  \draw [dotted] (\concreteVals+1,\concreteVals) -- (\concreteVals+1+\dotsSize,\concreteVals+\dotsSize);
  
  % insert label for the trivial triangular block
  \node [align=center] () at (0.3*\concreteVals,0.8*\concreteVals) {trivial \\ -- \\ standard \\ universal \\ property \\ applicable};

  % draw the lines that are necessary to frame cells which we have solved
  % first: nicolai-types
  \draw (2,1) -- (\concreteVals+\dotsSize+2,1);
  \foreach \i in {1, 2, ..., \concreteVals} {
    \draw (\i+1,0) -- (\i+1,1); 
  }
  \draw (\concreteVals+1+\dotsSize,0) -- (\concreteVals+1+\dotsSize,1);
  % then: current paper
  \draw (3,1) -- (3,2);
  \foreach \i in {3, ..., \concreteVals} {
    \draw (\i,\i-1) -- (\i+1,\i-1) -- (\i+1,\i); 
  }
  
  % draw label for unsolved cases
  \node () at (1.1*\concreteVals,0.4*\concreteVals) {unsolved cases};

  % draw the checkmarks for solved cases
  % first: solved cases in this paper
  \foreach \i in {2, ..., \concreteVals} {
    \node [align=center] () at (\i+0.5,\i-0.5) {\Checkmark \\ (here)};
  }
  % not sure to whom to attribute the -1 case, probably KECA...
  \node [align=center] () at (1.5,0.5) {\Checkmark \\ \cite{krausEscardoEtAll_existence}};
  % the propositional case
  \foreach \i in {2, ..., \concreteVals} {
    \node [align=center] () at (\i+0.5,0.5) {\Checkmark \\ \cite{kraus_generaluniversalproperty}};
  }
  \node [align=center] () at (\concreteVals+1+\dotsSize/2,0.5) {$\cdots$};
  \node [align=center] () at (\concreteVals+1.5+\dotsSize,0.5) {\Checkmark \\ \cite{kraus_generaluniversalproperty}};

\end{tikzpicture}
\end{center}
\caption{The universal property of $\trunc ? A$ with respect to $?$-types: trivial, solved, and open cases}   \label{fig:solvedAndOpen}
\end{figure}

The (probably) simplest case that is left open is the case $n \jdeq 0$, $m \jdeq 2$.
So, let us consider a function $f : A \to B$, where $B$ is $2$-truncated.
Which conditions do we have to impose on $f$ to conclude that it factors through $\trunc 0 A$?
As is easy to show, if $f$ factors through the $0$-truncation, then $\mapfunc {f}$ factors through the $(-1)$-truncation.
The necessary conditions for the latter have been worked out in~\cite{kraus_generaluniversalproperty}, and we could thus try to impose them on $\mapfunc f$ (at all points).
However, this does not work. In one aspect, the propositional truncation is a special case that is actually \emph{harder} than the higher truncations, 
intuitively because loop spaces are always pointed\footnote{This seems to correspond to the fact that the zeroth homotopy ``group'' is not a group, 
and does therefore not have a canonical element, which seems to occasionally make this special case harder in traditional topology as well.}
which we have already made use of in the definition of $\isNull$.
It turns out that in this ``pointed'' case one can get all these coherences (which make the result of~\cite{kraus_generaluniversalproperty} hard) for free.
Instead, the higher groupoid structure of loop spaces induces a different sort of coherence problem.
For example, it certainly \emph{is} necessary that, for any $a:A$ and $p : \id a a$, there is a proof $c_{a,p} : \mapfunc {f,a} (p) = \refl{f(a)}$.
From $c_{a,p}$, we can construct a proof that
$\mapfunc {a,f} (p \ct p)$ equals $\refl{f(a)}$, using functoriality of $\mapfunc{f,a}$. 
If we want the family $c$ to be ``fully coherent'', we have to force this proof to be the same as $c_{a,p \ct p}$. 
The work~\cite{kraus_generaluniversalproperty} concludes with a precise conjecture of how \emph{all} the required coherence conditions can be captured in the general case.
At this time, it is unknown whether this can be used to fill in the missing parts of Figure~\ref{fig:solvedAndOpen}.

\subparagraph*{Acknowledgements}

We would like to thank Thorsten Altenkirch and Christian Sattler for fruitful discussions. 
The second-named author is grateful for the opportunity to discuss some of the ideas that have led to this paper with participants of several events, including the \emph{Institut Henri Poincar\'{e} thematic trimester}.
We further want to acknowledge that Michael Shulman has made the connection with the Rezk completion precise,
and we thank the anonymous reviewers for their reports that have helped us improving this paper.

%% Bibliography

\bibliographystyle{plain}

\bibliography{trunc_refs}

\end{document}